\def\substitute#1#2#3{{#1}_{[\scriptstyle {#2} \mapsto {#3}]}}
\def\powerSet#1{2^{#1}}
\def\singletonSet#1{\{{#1}\}}
\def\valueSet#1#2{\{{#1}, \dots, {#2}\}}
\def\setDescription#1#2{\{{#1} \mid {#2}\}}
\def\infinity{\infty}
\def\minusInfinity{-\infinity}
\def\naturalNumbers{\mathbb{N}}
\def\naturalWithZero{{\naturalNumbers}_{0}}
\def\naturalWithZeroInfinity{{\naturalWithZero} \cup \singletonSet{\infinity}}
\def\naturalWithMinusInfinity{\naturalNumbers \cup \singletonSet{\minusInfinity}}
\def\integer{a}
\def\secondInteger{b}
\def\scalarMultiplication{\cdot}
\def\myValue{k} 
\def\secondValue{q}
\def\absoluteValue#1{|{#1}|}
\def\max{\textsc{max}}
\def\min{\textsc{min}}
\def\sign#1{\textsc{sign}({#1})}
\def\size#1{|{#1}|}
\def\index{i}
\def\secondIndex{j}
\def\thirdIndex{h}
\def\fourthIndex{k} 
\def\vector{a}
\def\secondVector{b}
\def\explicitVector#1#2#3{({#1}_{#2},\dots,{#1}_{#3})}
\def\vectorElement#1#2{{#1}_{#2}}
\def\prefix#1#2{{#1}^{:{#2}}}
\def\suffix#1#2{{#1}^{{#2}:}}
\def\infix#1#2#3{{#1}^{{#2}:{#3}}}
\def\collapse#1{\widetilde{#1}}
\def\orderedCouple#1#2{\langle {#1}, {#2} \rangle}
\def\preset#1{{}^\bullet{#1}}
\def\poset#1{{#1}^\bullet}
\def\define{\stackrel{\Delta}=}
\def\equivalenceDefine{\stackrel{\Delta}\Leftrightarrow}
\def\defineSet#1#2{\define{\setDescription {#1} {#2}}}
\def\assign{:=}
\def\component{v}
\def\componentSet{V}
\def\componentCount{n}
\def\influenceSet{I}
\def\graph{G}
\def\graphTuple{(\componentSet,\influenceSet)}
\def\regulator{u}
\def\regulatorSet#1{n^-(#1)}
\def\regulatorState{\omega}
\def\allRegulatorStateSet{\Omega}
\def\regulatorStateSet#1{{\allRegulatorStateSet}_{#1}}
\def\regulatorProjection#1#2{\omega_{#1}({#2})}
\def\constraintSet{R}
\def\activation{+}
\def\inhibition{-}
\def\observable{\mathrm{o}}
\def\constraintTypeSet{\{{\activation}, {\inhibition}, {\observable}\}}
\def\maximumVector{m}
\def\maximum#1{\vectorElement {\maximumVector} {#1}}
\def\domain#1{\valueSet 0 {\maximum {#1}}}
\def\prn{\graph_\maximumVector^\constraintSet}
\def\prnTuple{(\graph, \maximumVector, \constraintSet)}
\def\state{x}
\def\secondState{y}
\def\stateSet#1{S({#1})}
\def\transition{t}
\def\transitionSet#1{\Delta({#1})}
\def\transitionSubset{T}
\def\transitionValueChange{c}
\def\explicitTransitionValueChange#1#2#3{{#1}_{#2} \rightarrow {#1}_{#3}}
\def\explicitTransitionTuple#1#2{({#1}, {#2})}
\def\transitionTuple{\explicitTransitionTuple{\transitionValueChange}{\regulatorState}}
\def\explicitTransition#1#2#3#4{\explicitTransitionTuple{\explicitTransitionValueChange{#1}{#2}{#3}}{#4}}
\def\fireTransition{\cdot}
\def\componentProjection#1{V({#1})}
\def\transitionSignFunction{s}
\def\transitionSign#1{{\transitionSignFunction}({#1})}
\def\trace{\pi}
\def\secondTrace{\rho}
\def\subtraceInjection{\phi}
\def\inject#1{\subtraceInjection({#1})}
\def\parametrisedNetwork#1#2{({#1}, {#2})}
\def\activationLimitVector{l^A}
\def\inhibitionLimitVector{l^I}
\def\dprnTuple{(\prn, \activationLimitVector, \inhibitionLimitVector)}
\def\dprn{\mathcal{G}}
\def\increase{+1}
\def\decrease{-1}
\def\partialTransition{\delta}
\def\explicitPartialTransition#1#2#3#4{({\explicitTransitionValueChange{#1}{#2}{#3}}, {#4})}
\def\partialTransitionTuple{(\transitionValueChange, \partialRegulatorState)}
\def\fullPartialTransitionTuple{\explicitPartialTransition{\component}{\myValue}{\secondValue}{\partialRegulatorState}}
\def\anyValue{{\ast}}
\def\partialRegulatorState{\aleph}
\def\partialRegulatorStateSet#1{\mathcal{A}_{#1}}
\def\partialRegulatorStateSubset{\mathcal{A}}
\def\coverSetLimitFunction#1{\sigma_{#1}}
\def\coverSetLimit#1#2{\coverSetLimitFunction{#1}({#2})}
\def\regulationCoverSetBase{\mathcal{A}}
\def\regulationCoverSet#1{{\regulationCoverSetBase}_{#1}}
\def\parameterOrder{\unlhd}
\def\parametrisation{P}
\def\parametrisationSet{\mathcal{P}}
\def\allParametrisations#1{\mathbb{P}({#1})}
\def\parameter#1#2#3{\vectorElement{#1}{{#2}, {#3}}}
\def\admissibleSetFunction{\Psi}
\def\admissibleSet#1{\admissibleSetFunction({#1})}
\def\concreteSemantics{{\admissibleSetFunction}_{C}}
\def\abstractSemantics{{\admissibleSetFunction}_{A}}
\def\lowerBoundParametrisation{L}
\def\upperBoundParametrisation{U}
\def\parametrisationLattice{({\lowerBoundParametrisation},
  {\upperBoundParametrisation})}
\def\poLEQ{\leq}
\def\poGEQ{\geq}
\def\parametrisationOrder{\poLEQ}
\def\goalComponent{g}
\def\goalValue{\top}
\def\goal{{\goalComponent}_{\goalValue}}
\def\objectiveTransitionSet#1{\tau({#1})}
\def\validObjectiveTransitionSet#1#2{\tau_{#1}({#2})}
\def\objective{O}
\def\explicitObjective#1#2#3{{#1}_{#2} \rightsquigarrow {#1}_{#3}}
\def\objectiveValues#1#2{{#1} \rightsquigarrow {#2}}
\def\objectiveSet{\mathcal{O}}
\def\reducedObjectiveSet{\mathcal{B}}
\def\objectiveSignFunction{s}
\def\objectiveSign#1{{\objectiveSignFunction}({#1})}
\title{Combining Refinement of Parametric Models with Goal-Oriented Reduction of Dynamics%
\thanks{
This work has been partly funded by
ANR-FNR project ``AlgoReCell'' ANR-16-CE12-0034,
by Labex DigiCosme (project ANR-11-LABEX-0045-DIGICOSME) operated by ANR as part of the program
``Investissement d'Avenir'' Idex Paris-Saclay (ANR-11-IDEX-0003-02),
and by
ERATO HASUO Metamathematics for Systems Design Project (No. JPMJER 1603), JST.
}}
\author{Stefan Haar\inst{1}, Juraj Kol\v{c}\'ak\inst{1,2}, Lo\"ic Paulev\'e\inst{3,4}}
\institute{
	LSV, CNRS \& ENS Paris-Saclay, Universit\'e Paris-Saclay, France
    \and
    National Institute of Informatics, Tokyo, Japan
    \and
LRI UMR 8623, Univ. Paris-Sud -- CNRS, Universit\'e Paris-Saclay, Orsay, France
\and
Univ. Bordeaux, Bordeaux INP, CNRS, LaBRI, UMR5800, F-33400 Talence, France}
\begin{document}

\maketitle

\begin{abstract}
    Parametric models abstract part of the specification of dynamical models by integral parameters.
    They are for example used in computational systems biology, notably with parametric regulatory
    networks, which specify the global architecture (interactions) of the networks, while
    parameterising the precise rules for drawing the possible temporal evolutions of the states of
    the components.
    A key challenge is then to identify the discrete parameters corresponding to concrete models with
    desired dynamical properties.
    This paper addresses the restriction of the abstract execution of parametric regulatory
    (discrete) networks by the means of static analysis of reachability properties (goal states).
Initially defined at the level of concrete parameterised models, the goal-oriented reduction of
dynamics is lifted to parametric networks, and is proven to preserve all the minimal traces to the
specified goal states.
It results that one can jointly perform the refinement of parametric networks (restriction of domain
of parameters) while reducing the necessary transitions to explore and preserving reachability
properties of interest.
\end{abstract}


\section{Introduction}

Various cyber and physical systems are studied by the means of discrete dynamical models which
describe the possible temporal evolution of the state of the components of the system.
Defining such models requires extensive knowledge on the underlying system for specifying the rules
which generate the admissible state transitions over time.
Usually, and especially for physical systems, such as biological networks for which discrete models
are extensively employed \cite{Thomas73,Helikar2012,Cohen2015,Bartocci2016,Collombet2017,ColomotoNotebook2018},
it is common to lack such precise knowledge, making an accurate specification of discrete models challenging.

With \emph{parametric} models, part of the specification of the rules for
generating the discrete transitions is encoded as (integral) parameters.
Thus, a parametric model abstracts a set of concrete \emph{parameterised} models, this set being
characterised by the domain of parameter values.

In this paper, we focus on
\emph{Parametric Regulatory Networks} (PRNs), also known as Thomas Networks
\cite{Thieffry95,Bernot07,Bernot17,me18},
which are commonly employed for modelling qualitative dynamics of biological systems.
PRNs allow separating biological knowledge on the pairwise interactions (the architecture of the network) from
the rules of interplay between the interactions, usually less known.

In the literature, PRNs are mainly used as a basic framework for identifying fully parameterised
models (i.e., Boolean and multilevel networks) which satisfy dynamical
properties typically generated from experimental data.
This identification task, related to so-called model inference
and process mining~\cite{Carmona16,Mokhov16,Koutny16,deLeon15},
consists in transforming an abstract parametric model into a set of concrete parameterised
models verifying desired dynamical properties.
For PRNs, state-of-the-art methods rely on parameter enumeration~\cite{smbionet},
coloured model-checking~\cite{me12},
logic programming and Boolean satisfiability~\cite{Corblin2012,Ostrowski16},
and Hoare logic~\cite{Bernot18-Hoare}.

However, the exhaustive identification of parameters is often limited to small models, as the set of
parameterised models can turn out to be too large to be exhaustively enumerated and further
analysed.

In \cite{me18}, we introduced a semantics of PRNs enabling the \emph{refinement} of a PRN by
restricting the domain of its parameters without having to enumerate concrete models,
keeping them in a compact abstract representation instead.
The refinement is performed according to concrete discrete state transitions: the domain of parameters is
restricted so that it abstracts all the concrete models in which the state transition is
admissible.
Essentially, such semantics of PRNs enable efficient exploration of dynamics of a set of
parameterised models.

This exploration suffers from the same bottleneck as individual parameterised models: the number of
reachable states grows exponentially with the number of components and thus, becomes intractable for
large networks.
The exploration of the reachable state space is usually performed to verify dynamical properties.
Consequently, various \emph{model reduction} methods have been designed on concrete parameterised models
to enhance the tractability of their verification \cite{HP-ppl06,Talcott2006,Pauleve17,Chatain17}:
by reducing the transitions to consider, these methods limit the reachable state space to explore
while guaranteeing the correctness of the verification.

In this paper, we address the combination of refinement operations on parametric models with
model reductions initially defined at the level of concrete individual models.
Essentially, the challenge consists in lifting up such model reductions so they can be performed at
the abstract level of parametric models, while ensuring the correctness of their refinement.

We focus on reachability properties, i.e., starting in an initial configuration, the ability to
eventually reach a given (partial) configuration.
On the one hand, we are interested in refining PRNs to accurately identify concrete
parameterised discrete network models that verify the reachability property;
on the other hand, we want to take advantage of goal-driven exploration of dynamics of parameterised
models to ignore transitions which do not influence the reachability of the goal, enhancing the
tractability of the analysis.

The refinement of PRNs we consider for reachability properties has been introduced in \cite{me18}.
It consists in dynamically drawing transitions allowed by at least one concrete model, and
subsequently restricts the domain of parameters to exclude models which do not allow the drawn
transition.
The generation of transitions is done directly from the abstract representation of the set of concrete
models, and therefore involves no enumeration of parameterised models.

The goal-oriented model reduction we consider has been introduced in \cite{Pauleve17} at the level of
parameterised network models.
Given a reachability property (goal), the method relies on static analysis by abstract
interpretation to identify transitions which are not involved in any minimal trace leading to the
goal. Here the minimality refers to the absence of a sub-trace.
Whereas deciding reachability properties in parametrised models (namely automata networks)
is a PSPACE-complete problem~\cite{ChengEP95},
the goal-oriented model reduction has a complexity polynomial in the number of
components and exponential in the in-degree of components in the networks
(components having a direct influence on a single one).

In this article we present a lifting of the goal-oriented reduction from
parametrised models to sets of models with shared architecture,
represented by parametric models.
To this end, we introduce a directed version of PRNs which allow us to efficiently
capture model reduction without the need to explicitly enumerate all possible transitions.
We conduct the reduction itself on abstract dynamics of PRNs where instead of
enumerating all enabled transitions, we only consider the minimal necessary
condition for each component to change value.

The introduced reduction method can be applied on-the-fly to speed up
reachability checking in parametric models.
Thanks to the preservation of \emph{all} minimal traces, it is guaranteed to capture all
parametrised models capable of reproducing the coveted behaviour.

\paragraph*{Outline}
Section~\ref{sec:prns} recalls the definition of parametric regulatory networks,
their dynamics, constraints on influences and finally presents a generalised
parametrisation set semantics. In Section~\ref{sec:reduction}, the goal-oriented
model reduction procedure is extended from parametrised models
to parametric models. Directed version of PRNs is introduced for this purpose
alongside an abstraction of dynamics designed to alleviate
the reduction complexity.
Section~\ref{sec:regulation_cover_set_inference} supplies an algorithm for computing
a suitable abstraction of PRN dynamics used in the reduction procedure.
Finally, Section~\ref{sec:discussion} summarises the results and offers a brief
introduction to possible extensions and applications
as well as future directions.

\paragraph*{Notations}
We use $\prod$ to build Cartesian products between sets.
As the ordering of components matters, $\prod$ is not commutative.
Therefore, we write
$\prod^{\leq}_{x\in X}$
for the product over elements in $X$ according to a total order $\leq$.
To ease notations, when the order is clear from the context, or when either $X$ is a set of
integers, or a set of integer vectors, on which we use the lexicographic
ordering, we simply write $\prod_{x\in X}$.
Given a sequence of $n$ elements
$\pi = (\pi_i)_{1 \leq i \leq n}$,
we write $\collapse{\pi}\defineSet{\pi_i}{1\leq i\leq n}$ for the set of its elements.
Given a vector $v=\langle v_1,\dots,v_n\rangle$,
we write $\substitute{v}{i}{y}$ for the vector equal to $v$ except on the component $i$, which is equal to
$y$.


\section{Parametric Regulatory Networks}
\label{sec:prns}

Regulatory networks are finite discrete dynamical systems where the components evolve
individually with respect to the value of (a few) other components, their regulators.
The value of components in regulatory networks ranges in a finite discrete domain, usually
represented as $\{0, ..., m\}$ for some $m\in\mathbb N$, thus extending Boolean networks~\cite{Thomas73}.
The evolution of components is then defined by discrete functions which associate to the global states
of the network the value towards which each component tends.

Thus, defining regulatory networks requires knowledge on which components influence each
others, and how the value of each component is computed from the value of its regulators.
\emph{Parametric} Regulatory Networks (PRNs) allow to decouple this specification by having on the
one hand a fixed architecture of the network, so-called \emph{influence graph}, and on the other
hand discrete parameters, which when instantiated specify the functions of the regulatory network.


\subsection{Influence graph and constraints}
\label{sec:influence_graph_and_constraints}

The influence graph encodes the directed interactions between the components
of the regulatory networks:
a component $u$ having a direct influence on component $v$ means that in some states of the regulatory network,
the computation of the value of node $v$ \emph{may} depend on the value of $u$.
Importantly, if the component $w$ has no direct influence of $v$, then the computation of the value
on $v$ never depends on the value of $w$.


{ 

\begin{definition}[Influence Graph]
\label{def:influence_graph}
An \emph{influence graph} $\graph$ is a tuple $\graphTuple$
where $\componentSet$ is a finite set of $\componentCount$ nodes (components)
and $\influenceSet \subseteq \componentSet \times \componentSet$
is a set of directed edges (influences).

\noindent
For each $\component \in \componentSet$ we denote the set of its \emph{regulators} by
$\regulatorSet{\component} \defineSet{\regulator \in \componentSet}
{(\regulator,\component) \in \influenceSet}$.

\end{definition}
}

Besides the existence/absence of direct influences between components, it is usual to have some
knowledge about the nature of the influences.
Two kinds of constraints are generally considered: \emph{signs} and \emph{observability}.

Influence signs are captured by monotonicity constraints.
An influence $(\regulator, \component) \in \influenceSet$ is
\emph{positive-monotonic}, denoted $\activation$,
if the sole increase of the value of the regulator
$\regulator$ cannot cause a decrease of the computed value of the target $\component$.
Symmetrically, an influence $(\regulator, \component) \in \influenceSet$
is \emph{negative-monotonic}, denoted $\inhibition$,
if the sole increase in the value of $\regulator$
cannot cause an increase in the computed value of $\component$.
An influence $(\regulator, \component) \in \influenceSet$ is
\emph{observable}, denoted $\observable$,
if there exists a state in which the sole change of the value
of $\regulator$ induces a change of the computed value of $\component$.
Thus observability enforces that $\regulator$ does have an influence on the value of $\component$,
in some states of the regulatory network.
Remark that observability does not imply positive/negative monotonicity --
e.g., when the value of $\component$ is computed as the exclusive disjunction XOR between its own
value and the value of $\regulator$.

Let us denote a set of influence constraints for an influence graph $\graphTuple$ as
$\constraintSet\subseteq \influenceSet \times \constraintTypeSet$.
An example of influence constraint set is given in
Figure~\ref{fig:prn_example} (a) as labels on edges of the influence graph.


\subsection{Parametrisation}
\label{sec:parametrisation}

Let us consider an influence graph $\graph=\graphTuple$ among $\componentCount$ components
and a set of influences constraints $\constraintSet$.
Let us denote by $\maximumVector \in {\naturalNumbers}^{\componentCount}$
the vector specifying the maximum discrete value of each component:
the states of the regulator networks span $\prod_{\component\in\componentSet}\domain{\component}$.
The computation of the value of each component of a regulatory network
is constrained by $\graph$, $\constraintSet$ and $\maximumVector$.
In particular, $\graph$ imposes that the value of a component depends only on its regulators.

A \emph{regulator state} $\regulatorState$ of a component
$\component \in \componentSet$ is a vector specifying the value of each
regulator of $\component$.
We denote the set of all regulator states of a component as
$\regulatorStateSet{\component} \define
\prod_{\regulator \in \regulatorSet{\component}} \domain{\regulator}$.
Intuitively, a regulator state of a component $\component$ is a projection of
a global state of the network (states of all components)
to just the regulators of $\component$, that fully determine its evolution.

A \emph{parameter} $\orderedCouple{\component}{\regulatorState}$
then represents a target value towards which component
$\component \in \componentSet$ evolves in regulator state
$\regulatorState \in \regulatorStateSet{\component}$.
We denote the set of all parameters as
$\allRegulatorStateSet \define \bigcup_{\component \in \componentSet}
\singletonSet{\component} \times \regulatorStateSet{\component}$

A \emph{parametrisation} $\parametrisation$ is a vector assigning a value
to each parameter. The set of all parametrisations associated to an influence graph $\graph$
and a maximum value vector $\maximumVector$ is therefore given by
$\allParametrisations{\graph_\maximumVector} =
\prod_{\orderedCouple{\component}{\regulatorState} \in \allRegulatorStateSet}
^{\parameterOrder} \domain{\component}$
where $\parameterOrder$ is an arbitrary, but fixed total order on parameters.
The set of all parametrisations satisfying both the influence graph
$\graph=\graphTuple$ and influence constraints $\constraintSet$ with maximum value vector $\maximumVector$ is
then defined as:
\def\someParametrisation{\parametrisation \in \allParametrisations{\prn}}
\def\someRegState{\regulatorState \in \regulatorStateSet{\component}}
\def\someValue{\myValue \in \valueSet{1}{\maximum{\regulator}}}
\def\valueParameter{\parameter{\parametrisation}{\component}{\substitute{\regulatorState}{\regulator}{\myValue}}}
\def\minusOneParameter{\parameter{\parametrisation}{\component}{\substitute{\regulatorState}{\regulator}{\myValue - 1}}}
\begin{align*}
\allParametrisations{\prn}
 \define \{ \parametrisation \in
\textstyle\prod_{\orderedCouple{\component}{\regulatorState} \in \allRegulatorStateSet}
^{\parameterOrder} & \domain{\component}
 \mid \forall u,v\in\componentSet, \\
 (\regulator,\component,\activation) \in \constraintSet & \Rightarrow
\forall \someRegState,\forall \someValue:
\valueParameter \geq \minusOneParameter \\
(\regulator,\component,\inhibition)\in\constraintSet &\Rightarrow
\forall \someRegState,\forall \someValue:
\valueParameter \leq \minusOneParameter \\
(\regulator,\component,\observable) \in\constraintSet&\Rightarrow
\exists \someRegState,\exists \someValue:
\valueParameter \neq \minusOneParameter
 \}
\end{align*}

\subsection{Parametric Regulatory Networks}

A \emph{Parametric Regulatory Network} (PRN) gathers an influence graph $\graph$,
influence constraints $\constraintSet$, and maximum value vector $\maximumVector$, to which can then
be associated a subset of parametrisations $\allParametrisations{\prn}$.
A (\emph{parametrised}) regulatory network can then be defined by a couple $(\prn,\parametrisation)$
where $\parametrisation\in\allParametrisations{\prn}$.


{ 
\def\valueChange{\explicitTransitionValueChange{\component}{\index}{\secondIndex}}
\def\someTransition{\explicitTransitionTuple{\valueChange}{\regulatorState}}

\begin{definition}
\label{def:parametric_regulatory_network}

A \emph{parametric regulatory network} (PRN) is a tuple $\prnTuple$, written $\prn$,
where $\graph$ is an influence graph between $\componentCount$ components,
$\constraintSet$ is a set of influence constraints,
and $\maximumVector\in\mathbb{N}^\componentCount$ is a vector of
the maximum values of each component.

\noindent
The set of \emph{states} of $\prn$ is denoted by
$\stateSet{\prn} \define \prod_{\component\in\componentSet}\domain{\component}$.

\noindent
The set of \emph{(local) transitions} of $\prn$ is denoted by:
$\transitionSet{\prn}\define \{\someTransition \mid
\component \in \componentSet \wedge $
$\regulatorState \in \regulatorStateSet{\component} \wedge
\index, \secondIndex \in \domain{\component} \wedge
\absoluteValue{\index - \secondIndex} = 1 \}$.
We use $\componentProjection{\someTransition} = \component$
to denote the component whose value is changed by transition
$\someTransition \in \transitionSet{\prn}$.
Furthermore, $\transitionSign{\someTransition} = \transitionSign{\valueChange} =
\secondIndex - \index$ denotes the sign of the transition (value change).

\end{definition}

A transition $\someTransition \in \transitionSet{\prn}$ is \emph{enabled}
in state $\state \in \stateSet{\prn}$
if $\vectorElement{\state}{\component} = \index$
and $\regulatorProjection{\component}{\state} = \regulatorState$, where
$\regulatorProjection{\component}{\state}$ is the projection of state $\state$
to the regulators of $\component$.
Given a state $\state$ and a transition $\transition$ enabled in $\state$,
$\state \fireTransition \transition$ denotes the state
$\substitute{\state}{\component}{\secondIndex} \in \stateSet{\prn}$
obtained by firing transition $\transition$ in $\state$.

Finally, a transition
$\explicitTransition{\component}{\index}{\secondIndex}{\regulatorState} \in
\transitionSet{\prn}$ is \emph{enabled} by a parametrisation set
$\parametrisationSet\subseteq\allParametrisations{\prn}$
if there exists a  parametrisation $\parametrisation \in \parametrisationSet$ such that the parameter value
$\parameter{\parametrisation}{\component}{\regulatorState} =
\secondIndex + \integer \scalarMultiplication \transitionSign{
\explicitTransitionValueChange{\component}{\index}{\secondIndex}}$
for some $\integer \in \naturalWithZero$.



{ 

\begin{example}
\label{exp:parametric_regulatory_network}

An example of a PRN $\prn$ composed of an influence graph $\graph=\graphTuple$ and
vector $\maximumVector = {\singletonSet{1}}^{\size{\componentSet}}$ is depicted
in Figure~\ref{fig:prn_example}.
Based on the influences in $\graph$ and maximum values in $\maximumVector$,
all regulator states of each component,
which correspond to parameters of $\prn$, are determined.
The table in Figure~\ref{fig:prn_example} (b) lists all the parameters alongside
an example parametrisation $\parametrisation \in \allParametrisations{\prn}$.
$\prn$ combined with $\parametrisation$ identifies a unique parametrised network
$(\prn, \parametrisation)$.
The dynamics of $(\prn, \parametrisation)$ are given in
Figure~\ref{fig:prn_example} (c).

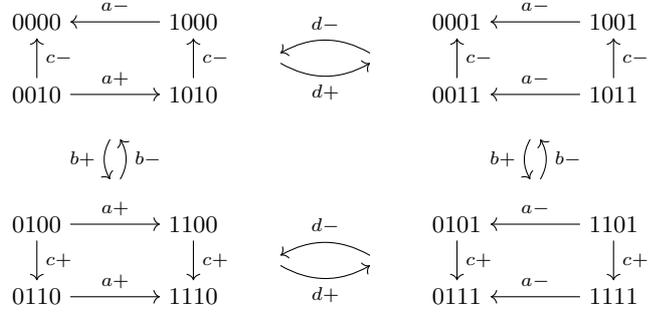
\begin{figure}[th!]
\centering
\begin{subfigure}{.42\linewidth}
  \centering

{ 

\begin{tikzpicture}
  \tikzstyle{every node}=[draw,ellipse,minimum width=20pt,minimum height=20pt,font=\bf\small,align=center];
  \tikzset{edge/.style={-{Stealth[scale=2,length=3,width=2]},line width=.5pt}};

  \node[] (a) {$a$};
	\node[right=50pt, above=30pt] (b) {$b$};
  \node[right=50pt] (c) {$c$};
	\node[right=50pt, below=30pt] (d) {$d$};

  \draw[edge, in=20, out=120, looseness=6] (b) to
  node[draw=none, above, midway] {$\inhibition\observable$} (b);
	\draw[edge, in=330, out=240, looseness=6] (d) to
  node[draw=none, below, midway] {$\inhibition\observable$} (d);
	\draw[edge, bend left=10] (b) to
  node[draw=none, right, midway] {$\activation\observable$} (c);
	\draw[edge, bend right=25] (b) to
  node[draw=none, above, midway] {$\activation$} (a);
  \draw[edge, bend left=25] (d) to
  node[draw=none, below, midway] {$\inhibition\observable$} (a);
  \draw[edge] (c) to
  node[draw=none, above, midway] {$\activation\observable$} (a);

\end{tikzpicture}
}
  \caption{The influence graph $\graph$.}
\end{subfigure}
\begin{subfigure}{.42\linewidth}

  \begin{tabular}{ll|c}
    \hline
    $\parameter{\parametrisation}{a}{\langle b=0,c=0,d=0\rangle}$
     & \multirow{8}{*}{$\in \domain{a}$} & 0\\
    $\parameter{\parametrisation}{a}{\langle b=1,c=0,d=0\rangle}$ & & 1\\
    $\parameter{\parametrisation}{a}{\langle b=0,c=1,d=0\rangle}$ & & 1\\
    $\parameter{\parametrisation}{a}{\langle b=1,c=1,d=0\rangle}$ & & 1\\
    $\parameter{\parametrisation}{a}{\langle b=0,c=0,d=1\rangle}$ & & 0\\
    $\parameter{\parametrisation}{a}{\langle b=1,c=0,d=1\rangle}$ & & 0\\
    $\parameter{\parametrisation}{a}{\langle b=0,c=1,d=1\rangle}$ & & 0\\
    $\parameter{\parametrisation}{a}{\langle b=1,c=1,d=1\rangle}$ & & 0\\\hline
    $\parameter{\parametrisation}{b}{\langle b=0\rangle}$ & \multirow{2}{*}{$\in \domain{b}$} & 1\\
    $\parameter{\parametrisation}{b}{\langle b=1\rangle}$ & & 0\\\hline
    $\parameter{\parametrisation}{c}{\langle b=0\rangle}$ & \multirow{2}{*}{$\in \domain{c}$} & 0\\
    $\parameter{\parametrisation}{c}{\langle b=1\rangle}$ & & 1\\\hline
    $\parameter{\parametrisation}{d}{\langle d=0\rangle}$ & \multirow{2}{*}{$\in \domain{d}$} & 1\\
    $\parameter{\parametrisation}{d}{\langle d=1\rangle}$ & & 0\\\hline
  \end{tabular}

  \caption{All the parameters of PRN $\prn$ with an example parametrisation
  $\parametrisation$.}
\end{subfigure}
\begin{subfigure}{.8\linewidth}

{ 

\begin{tikzpicture}[ampersand replacement=\&]
\matrix (m) [matrix of math nodes,row sep=0.5em,column sep=1.5em] {
   \&      \&    \&      \&    \& {} \&    \&      \&    \&      \&    \\
   \& 0000 \&    \& 1000 \&    \&    \&    \& 0001 \&    \& 1001 \&    \\
   \&      \&    \&      \& {} \&    \& {} \&      \&    \&      \&    \\
   \& 0010 \&    \& 1010 \&    \&    \&    \& 0011 \&    \& 1011 \&    \\
   \&      \& {} \&      \&    \&    \&    \&      \& {} \&      \&    \\
{} \&      \&    \&      \&    \&    \&    \&      \&    \&      \& {} \\
   \&      \& {} \&      \&    \&    \&    \&      \& {} \&      \&    \\
   \& 0100 \&    \& 1100 \&    \&    \&    \& 0101 \&    \& 1101 \&    \\
   \&      \&    \&      \& {} \&    \& {} \&      \&    \&      \&    \\
   \& 0110 \&    \& 1110 \&    \&    \&    \& 0111 \&    \& 1111 \&    \\
};


\path[->,every node/.style={font=\scriptsize}]
(m-5-3) edge[bend right] node[left] {$b+$} (m-7-3)
(m-7-3) edge[bend right] node[right] {$b-$} (m-5-3)

(m-5-9) edge[bend right] node[left] {$b+$} (m-7-9)
(m-7-9) edge[bend right] node[right] {$b-$} (m-5-9)

(m-3-5) edge[bend right] node[below] {$d+$} (m-3-7)
(m-3-7) edge[bend right] node[above] {$d-$} (m-3-5)

(m-9-5) edge[bend right] node[below] {$d+$} (m-9-7)
(m-9-7) edge[bend right] node[above] {$d-$} (m-9-5)





(m-8-2) edge node [right] {$c+$} (m-10-2)
(m-8-4) edge node [right] {$c+$} (m-10-4)
(m-8-8) edge node [right] {$c+$} (m-10-8)
(m-8-10) edge node [right] {$c+$} (m-10-10)

(m-4-2) edge node [right] {$c-$} (m-2-2)
(m-4-4) edge node [right] {$c-$} (m-2-4)
(m-4-8) edge node [right] {$c-$} (m-2-8)
(m-4-10) edge node [right] {$c-$} (m-2-10)

(m-4-2) edge node[above] {$a+$} (m-4-4)
(m-8-2) edge node[above] {$a+$} (m-8-4)
(m-10-2) edge node[above] {$a+$} (m-10-4)

(m-2-4) edge node[above] {$a-$} (m-2-2)
(m-2-10) edge node[above] {$a-$} (m-2-8)
(m-4-10) edge node[above] {$a-$} (m-4-8)
(m-8-10) edge node[above] {$a-$} (m-8-8)
(m-10-10) edge node[above] {$a-$} (m-10-8)
;
\end{tikzpicture}
}
  \caption{States and transitions of $(\prn, \parametrisation)$ depicted as
  nodes and edges of a state space graph respectively.
  Since components $b$ and $d$ may update values independently of other
  components (i.e. in any state),
  their value changes are only displayed schematically to improve readability.}
\end{subfigure}

\caption{Influence graph with influence constraints as labels, parameters and dynamics of a possible parametrisation
of PRN $\prn$.\label{fig:prn_example}}

\end{figure}

\end{example}
}

}

We capture the basic semantics of a PRN by traces,
which correspond to different possible behaviours of the network.


{ 
\def\length{\size{\trace}}
\def\traceElement#1{\vectorElement{\trace}{#1}}
\def\fireTrace#1{\state \fireTransition \traceElement{1} \fireTransition \dots \fireTransition \traceElement{#1}}

\begin{definition} [Trace]
\label{def:prn_trace}
Given a PRN $\prn$ and set of parametrisations
$\parametrisationSet\subseteq\allParametrisations{\prn}$,
a finite sequence $\trace = \explicitVector{\trace}{1}{\length}$
of transitions in $\transitionSet{\prn}$ is a \emph{trace} of $\prn$
starting in state $\state \in \stateSet{\prn}$
iff $\forall \index \in \valueSet{1}{\length}:
\traceElement{\index}$ is enabled in state $\fireTrace{\index - 1}$
and by parametrisation set $\parametrisationSet$.

To simplify notation, we use
$\preset{\trace} = \state$ and
$\poset{\trace} = \fireTrace{\length}$.
Moreover, let $\prefix{\trace}{\index} = \explicitVector{\trace}{1}{\index}$,
$\suffix{\trace}{\index} = \explicitVector{\trace}{\index}{\size{\trace}}$ and
$\infix{\trace}{\index}{\secondIndex} =
\explicitVector{\trace}{\index}{\secondIndex}$ denote the \emph{prefix},
\emph{suffix} and \emph{infix} sub-traces of $\trace$ respectively.

\end{definition}
}

With $\parametrisation\in\allParametrisations{\prn}$
and $\parametrisationSet = \{\parametrisation\}$, the above definition gives the traces of the
parametrised regulatory network $(\prn,\parametrisation)$.
In the general case, each transition is independently enabled with respect to any parametrisation in
$\parametrisationSet$.


\subsection{Parametrisation Set Semantics}
\label{sec:parametrisation_set_semantics}

With the basic PRN semantics (Definition~\ref{def:prn_trace}),
two transitions are allowed to fire consequently in a single trace
despite no single parametrisation enabling both of them.
To forbid such behaviours, semantics have been introduced for PRNs
that associate each trace (set of transitions)
with a set of parametrisations~\cite{me18}.
Said trace can then be extended only by transitions enabled under some
parametrisation from the associated set.

The purpose of parametrisation set semantics is to discriminate
transitions based on their causal history.
This is done by progressive restriction of the set of parametrisations
to \emph{admissible} parametrisations.
Following~\cite{me18}, we consider a parametrisation $\parametrisation$ to be admissible
if all transitions in the set (causal history) are enabled under $\parametrisation$.
However, we allow for a more lenient definition to make room for over-approximation.


{ 
\def\traceElement#1{\vectorElement{\trace}{#1}}
\def\otherSubset{{\transitionSubset}'}

\begin{definition}[Parametrisation Set Semantics]
\label{def:parametrisation_set_semantics}
Given a PRN $\prn$, a function $\admissibleSetFunction: \powerSet{\transitionSet{\prn}} \rightarrow
\powerSet{\allParametrisations{\prn}}$ is a \emph{parametrisation set semantics}
of $\prn$ iff:
\begin{enumerate}
  \item $\forall \transitionSubset \subseteq \transitionSet{\prn}:
    \setDescription{\parametrisation \in \allParametrisations{\prn}}
      {\forall \transition \in \transitionSubset: \parametrisation \ \text{enables}\ \transition}
    \subseteq \admissibleSet{\transitionSubset}$,
  \item $\forall \transitionSubset, \otherSubset \subseteq \transitionSet{\prn}:
  \transitionSubset \subseteq \otherSubset \Rightarrow
  \admissibleSet{\otherSubset} \subseteq \admissibleSet{\transitionSubset}$.
\end{enumerate}

A trace $\trace$ of the PRN $\prn$ is \emph{realisable} according to the
parametrisation set semantics if and only if
$\admissibleSet{\collapse{\trace}} \neq \emptyset$.

\end{definition}
}

A best abstraction $\concreteSemantics$ producing parametrisation sets of exactly
all the parametrisations that allow each transition in the input set has been defined in \cite{me18}.
To facilitate practical application, as the number of parametrisations may be
in the worst case double exponential in the number of components,
\cite{me18} has tackled the semantics $\abstractSemantics$ over-approximating
parametrisation sets by convex covers, keeping track of only a maximal and minimal element
and thus avoiding the need to enumerate parametrisations explicitly.
More formally, let us first reintroduce the \emph{parametrisation order}.


{ 

\begin{definition}
\label{def:parametrisation_order}

The \emph{parametrisation order} on vectors of length $\myValue$
is the partial order $\parametrisationOrder$ defined as follows:

\[
	\vector \poLEQ \secondVector \equivalenceDefine \forall \index \in
	\valueSet{0}{\myValue - 1}: \vectorElement{\vector}{\index} \leq
	\vectorElement{\secondVector}{\index}
\]

\end{definition}
}

The parametrisation set given by $\abstractSemantics$ is a couple of parametrisations
$\parametrisationLattice$ representing the lower and upper bound.
Formally, $\parametrisationLattice =
\setDescription{\parametrisation \in \allParametrisations{\prn}}
{\lowerBoundParametrisation \poLEQ \parametrisation \wedge
\upperBoundParametrisation \poGEQ \parametrisation}$
is a bounded convex sublattice of all vectors of length
$\size{\allRegulatorStateSet}$ with the parametrisation order.
In~\cite{me18}, a method has been provided to compute
\emph{the tightest lower and upper bounds} for a given set of transitions
and influence constraints without the need to explicitly enumerate the
parametrisations.

Naturally, checking whether a particular parametrisation belongs to the abstracted set
can be done simply by comparing it with the bounds.
Similarly, determining whether a transition is enabled (by any parametrisation)
can be done without explicit enumeration of the parametrisations.
In fact, it is enough to compare against the corresponding parameter value
of the relevant bound, e.g.
$\vectorElement{\upperBoundParametrisation}{\component, \regulatorState}
\geq \myValue + 1$ is the sufficient and necessary condition for the transition
$\explicitTransition{\component}{\myValue}{\myValue + 1}{\regulatorState}$
to be enabled.

\medskip

In this article, we consider any parametrisation set semantics compatible
with Definition~\ref{def:parametrisation_set_semantics}, however,
special attention is given to $\abstractSemantics$ as it can be used with the restriction method
without the need to enumerate the parametrisations explicitly.


\section{Goal-Oriented Reduction}
\label{sec:reduction}

In this section, we extend the goal-oriented model reduction procedure from
parametrised models (in particular, automata networks)~\cite{Pauleve17}
to the parametric models.


\subsection{Minimal Traces}
\label{sec:minimal_traces}

Given a PRN $\prn$ and a state $\state \in \stateSet{\prn}$,
we say a value $\goalValue \in \domain{\goalComponent}$
of a component $\goalComponent \in \componentSet$ is \emph{reachable}
from $\state$ iff either $\vectorElement{\state}{\goalComponent} = \goalValue$
or there exists a realisable trace $\trace$ with $\preset{\trace} = \state$
and $\vectorElement{\poset{\trace}}{\goalComponent} = \goalValue$.

We are interested in reachability by \emph{minimal} traces.
Adapted from~\cite{Pauleve17}, a realisable trace is minimal for $\goal$
reachability if there exists no other realisable trace reaching $\goal$
with a subsequence of transitions.


{ 
\def\parametrisedPRN{(\prn, \parametrisation)}
\def\smallerTrace{\rho}
\def\traceDom{\valueSet{1}{\size{\trace}}}
\def\smallerTraceDom{\valueSet{1}{\size{\smallerTrace}}}
\def\traceElement#1{\vectorElement{\trace}{#1}}
\def\smallerTraceElement#1{\vectorElement{\smallerTrace}{#1}}

\begin{definition} [Minimal Trace]
\label{def:minimal_trace}
Given a parametrised PRN $\parametrisedPRN$, a trace $\trace$
of $\parametrisedPRN$ is \emph{minimal} w.r.t. reachability of goal
$\goal$ from state $\state$ if and only if there exists no other
trace $\smallerTrace$ satisfying $\state = \preset{\smallerTrace}$,
$\vectorElement{\poset{\smallerTrace}}{\goalComponent} = \goalValue$,
$\size{\smallerTrace} < \size{\trace}$ and existence of an injection
$\subtraceInjection: \smallerTraceDom \rightarrow \traceDom$
such that $\forall \index,\secondIndex \in \smallerTraceDom:
\index \leq \secondIndex \Rightarrow \inject{\index} \leq \inject{\secondIndex}$
and $\smallerTraceElement{\index} = \traceElement{\inject{\index}}$.

\end{definition}
}

An important property of minimal traces is their independence on the exact
parametrisation. More precisely, using parametrisation set semantics,
if a trace is minimal for at least one parametrisation,
then it is minimal for any other parametrisation under which it is enabled.


{ 
\def\otherParametrisation{{\parametrisation}'}
\def\discreteNetwork{\parametrisedNetwork{\prn}{\parametrisation}}
\def\otherDiscreteNetwork{\parametrisedNetwork{\prn}{\otherParametrisation}}
\def\traceParamSet{\admissibleSet{\collapse{\trace}}}
\def\otherTraceParamSet{\admissibleSet{\collapse{\secondTrace}}}

\begin{property}[Parametrisation Independence of Minimal Traces]
\label{ppt:minimal_trace_independence}
Let $\prn$ be a PRN and $\trace$ a realisable trace minimal in
$\discreteNetwork$ for some $\parametrisation \in \traceParamSet$.
Then, $\trace$ is minimal in any $\otherDiscreteNetwork$
where $\otherParametrisation \in \traceParamSet$

\end{property}

\begin{proof}

$\otherParametrisation \in \traceParamSet$ guarantees $\trace$ is a proper trace
of $\otherDiscreteNetwork$.
We conduct the rest of the proof by contradiction.
Let thus $\secondTrace$ be a trace in $\otherDiscreteNetwork$
satisfying the conditions in Definition~\ref{def:minimal_trace}.
From the existence of the injection $\subtraceInjection$ we get
$\collapse{\secondTrace} \subseteq \collapse{\trace}$ and from the definition
of parametrisation set semantics $\traceParamSet \subseteq \otherTraceParamSet$.
$\secondTrace$ is therefore realisable in $\discreteNetwork$
meaning that $\trace$ is not minimal in $\discreteNetwork$
which is a contradiction.
\end{proof}
}

Property~\ref{ppt:minimal_trace_independence} allows us to speak of a realisable
trace of a PRN as minimal without the need to explicitly state the
parametrisation which is witness to the minimality.


\subsection{Directed Parametric Regulatory Networks}
\label{sec:directed_parametric_regulatory_networks}

The goal-oriented reduction for parametrised models is facilitated
by pruning transitions which are guaranteed to not be used by any
minimal trace reaching the goal~\cite{Pauleve17}.
The PRN definition could be extended to allow for pruning the transition
set to subsets $\transitionSubset \subseteq \transitionSet{\prn}$.
Unlike the case of general parametrised models, however,
the transitions of a PRN only allow to change the value of a component by steps of size $1$.
As such, if a transition increasing the value of a component
$\component \in \componentSet$ to $\myValue \in \domain{\component}$
is to be pruned, all transitions increasing the value of $\component$
beyond $\myValue$ can surely be pruned as well,
and symmetrically for decreasing transitions.
Thus, instead of removing individual transitions of PRNs, we disable
increasing, respectively decreasing, value of a component
in a given regulator state beyond a certain value (or entirely).
This is facilitated by keeping record of the activation (increase)
and inhibition (decrease) limits for each component in vectors
$\activationLimitVector$ and $\inhibitionLimitVector$ respectively.


{ 
\def\source{\state}
\def\target{\secondState}
\def\someRegState{\regulatorState \in \allRegulatorStateSet}

\begin{definition} [Directed Parametric Regulatory Network]
\label{def:directed_prn}
A \emph{directed parametric regulatory network} (DPRN) is a tuple
$\dprn=\dprnTuple$, where $\prn$ is a parametric regulatory network,
$\activationLimitVector \in {(\naturalWithMinusInfinity)}^
{\size{\allRegulatorStateSet}}$
is a vector of activation limits for each regulator state $\someRegState$
and $\inhibitionLimitVector \in {(\naturalWithZeroInfinity)}^
{\size{\allRegulatorStateSet}}$
is a vector of inhibition limits for each regulator state $\someRegState$.

The set of states of $\dprn$ is equal to the set of states
of the underlying PRN: $\stateSet{\dprn} = \stateSet{\prn}$.

The set of transitions of $\dprn$ is a subset of the PRN transitions satisfying
the activation and inhibition limits
$\activationLimitVector$ and $\inhibitionLimitVector$ respectively.
Formally, $\transitionSet{\dprn} \subseteq \transitionSet{\prn}$ such that:
$$\forall \transition
 = \explicitTransition{\component}{\index}{\secondIndex}{\regulatorState}
  \in \transitionSet{\prn}:
\transition \in \transitionSet{\dprn} \equivalenceDefine
\begin{cases}
  \index < \vectorElement{\activationLimitVector}{\regulatorState}
   &\text{if} \ \transitionSign{\transition} = \increase \\
  \index > \vectorElement{\inhibitionLimitVector}{\regulatorState}
   &\text{if} \ \transitionSign{\transition} = \decrease
\end{cases}$$

\end{definition}
}

One may remark that by using parametrisation set semantics, it is already
possible to restrict the activation or inhibition of components in individual
regulator states while just using PRNs.
While it is true that an equivalent set of enabled transitions
can be achieved both by restricting the parametrisation set and by DPRN,
the semantics of the two restrictions are different.

The parametrisation set semantics serves primarily to keep track of
parametrisations capable of reproducing certain behaviour(s),
and thus restrict the set of enabled transitions based on their causal history.
On the other hand, the $\activationLimitVector$ and $\inhibitionLimitVector$
of DPRN mark components whose activation or inhibition (beyond a certain value)
is not necessary to reach a given goal (via a minimal trace).
A parametrisation that allows changing a component value beyond the limit,
thus allowing behaviour which does not lead to the established goal may still
allow a different sequence of transitions leading to the goal.
We want to retain such parametrisations, thus the "useless" behaviour which
does not lead to the goal cannot be restricted
in the parametrisation set semantics.
Therefore, keeping the information about parametrisations and about
the activation and inhibition limits independently is key.

The complete independence of parametrisation set semantics and the limit vectors
$\activationLimitVector$ and $\inhibitionLimitVector$ allows us to employ both
in parallel.
The extension of both traces (Definition~\ref{def:prn_trace})
and parametrisation set semantics
(Definition~\ref{def:parametrisation_set_semantics}) from PRNs to DPRNs
is thus natural.


\subsection{Objectives}

The reduction for parametrised models relies on identifying sub-goals,
or \emph{objectives}, local in terms of individual components.
We reintroduce the concept of a (local) objective for the parametric model.


{ 
\def\someObjective{\explicitObjective{\component}{\index}{\secondIndex}}
\def\traceIndex{k}

\begin{definition}[Objective]
\label{def:objective}
Given an DPRN $\dprn$, an \emph{objective}
$\someObjective$ is a pair of values
$\index, \secondIndex \in \domain{\component}$ of a component
$\component \in \componentSet$.

An objective $\someObjective$ is \emph{valid} in a starting state
$\state \in \stateSet{\dprn}$ iff $\index = \secondIndex$
or a realisable trace $\trace$ of the parametrised DPRN exists,
such that $\preset{\trace} = \state$,
$\vectorElement{\poset{\trace}}{\component} = \secondIndex$
and $\exists \traceIndex \in \valueSet{0}{\size{\trace} - 1}:
\vectorElement{\preset{\vectorElement{\trace}{\traceIndex}}}{\component}
 = \index$.

$\objectiveValues{\index}{\secondIndex}$ is used to denote $\someObjective$
if the component $\component \in \componentSet$ is obvious from the context.

Each objective $\someObjective$ captures either increase or decrease
of the value of the component.
Formally, the sign of an objective
$\objectiveSign{\someObjective} \define \sign{\secondIndex - \index}$.

\end{definition}
}

By requiring the witness of objective validity to be a realisable trace
instead of just a trace of enabled transitions, we retain only behaviours
which are present in at least one parametrised model.

The objective represents a change of value of only one component
$\component \in \componentSet$.
A realisable trace reproducing such a change may, however,
require to also change value of other components,
namely the regulators of $\component$.
Each objective is thus associated to a set of transitions which may be used
to complete it, and from which the required regulator values can be obtained.


\subsection{Regulation Cover Sets}
\label{sec:regulation_cover_sets}

Depending on the parametrisation set semantics, it may be a common occurrence
for a particular value change to be enabled by numerous regulator states
(recall that enabling is existential w.r.t. parametrisations).
Such cases lead to a substantial redundancy in individual transition
enumeration as the value of only a subset of regulators may be enough to
determine whether a value change is enabled or not.
To this end we introduce a definition of a partial regulator state,
which is used to represent a (minimal) condition for a value change to be enabled.


{ 
\def\regulatorInPartial{\vectorElement{\partialRegulatorState}{\regulator}}

\begin{definition}[Partial Regulator State]
\label{def:partial_regulator_state}
A \emph{partial regulator state} $\partialRegulatorState$ of component
$\component \in \componentSet$ is a vector
$\partialRegulatorState \in \prod_{\regulator \in \regulatorSet{\component}}
\domain{\regulator} \cup \{\anyValue\}$ assigning a value or a wildcard
character $\anyValue$ to each regulator $\regulator$ of $\component$.
By abuse of notation, $\partialRegulatorState$ is also a set of
regulator states, more precisely
$\partialRegulatorState \subseteq \regulatorStateSet{\component}$ such that
for all $\regulatorState \in \regulatorStateSet{\component}$:
$$\regulatorState \in \partialRegulatorState \equivalenceDefine
\forall \regulator \in \regulatorSet{\component}:
\vectorElement{\regulatorState}{\regulator} = \regulatorInPartial
 \vee \regulatorInPartial = \anyValue$$

The set of all partial regulator states of $\component \in \componentSet$
is denoted as $\partialRegulatorStateSet \component$.

\end{definition}
}

Partial regulator states can be utilised to abstract the DPRN dynamics
while minimising the number of repetitions of each value of each regulator.
We capture these abstractions by the means of sets of partial regulator states,
called \emph{regulation cover sets}, representing the enabling condition of
a given value change.
We impose two conditions on regulation cover set of value change
$\transitionValueChange$.
First, the set has to cover all regulator states $\regulatorState$
such that $\transitionTuple$ is enabled.
In other words, for each such regulator state there must exist one or more
partial regulator states which specify the value of each regulator
in $\regulatorState$.
Second, no bad regulator state $\regulatorState$
such that $\transitionTuple$ is not enabled is subsumed by any of the partial
regulator states in the cover set.
These two conditions not only guarantee that the abstract dynamics enable
exactly the same value changes as the concrete dynamics,
but also preserve the regulator information, i.e. each value of each regulator
that appears in the enabling conditions.
The regulator information is necessary to accurately determine which
regulator values are necessary to complete an objective.


{ 
\def\coverSet{\regulationCoverSet{\transitionValueChange}}
\def\limitFunction{\coverSetLimitFunction{\transitionValueChange}}
\def\limit#1{\coverSetLimit{\transitionValueChange}{#1}}
\def\badPartialRegulatorStateSet{{\textsc{inv}}_{\transitionValueChange}({\partialRegulatorStateSubset})}
\def\mergeFunction{{\mu}_{\transitionValueChange}}
\def\merge#1{\mergeFunction({#1})}

\begin{definition}[Regulation Cover Set]
\label{def:regulation_cover_set}
Let $\dprn$ be a DPRN and $\parametrisationSet$ a parametrisation set
from the parametrisation set semantics,
and let $\transitionValueChange
 = \explicitTransitionValueChange{\component}{\index}{\secondIndex}$
be an arbitrary value change of a component $\component \in \componentSet$.
A set of partial regulator states
$\coverSet \subseteq \partialRegulatorStateSet{\component}$
is a cover set of $\transitionValueChange$ iff the following is satisfied:
\begin{itemize}
  \item $\forall \regulatorState \in \regulatorStateSet{\component}:
  \transitionTuple$ is enabled under $\parametrisationSet$:
  $\forall \regulator \in \regulatorSet{\component}:
  \exists \partialRegulatorState \in \coverSet:
  \regulatorState \in \partialRegulatorState \wedge
  \vectorElement{\regulatorState}{\regulator} =
  \vectorElement{\partialRegulatorState}{\regulator}$.
  \item $\forall \regulatorState \in \regulatorStateSet{\component}:
  \transitionTuple$ is not enabled:
  $\forall \partialRegulatorState \in \coverSet:
  \regulatorState \notin \partialRegulatorState$.
\end{itemize}

\end{definition}
}

Any regulation cover set, including the concrete regulation cover set
$\setDescription{\regulatorState}{\regulatorState \in
\regulatorStateSet{\component}: \transitionTuple \ \text{is enabled}}$,
may be used for the purposes of the reduction procedure.
The aim of the regulation cover set being to minimise the number of individual
regulator values which appear across all of the partial regulator states,
an algorithm that computes regulator cover sets with no more regulator value specifications
than the concrete regulation cover set is introduced in Section~\ref{sec:regulation_cover_set_inference}.


\subsection{Reduction of Directed Parametric Regulatory Networks}
\label{sec:dprn_reduction}

Our reduction procedure essentially relies on associating to objectives the set
of (partial) transitions which are necessary to realise the objective
within the corresponding components of the PRN.
Starting from the final (goal) objective, the procedure then recursively
collects objectives related to the identified transitions.

Since PRNs allows only unitary value changes,
the realisation of an objective
$\explicitObjective{\component}{\index}{\secondIndex}$ involves
a monotonic change of value of component $\component$ from $\index$
to $\secondIndex$, where each change of value depends on specific (partial)
regulator state.
This coupling of a value change with a corresponding partial regulator state
is referred to as a \emph{partial transition}.


{ 
\def\someObjective{\explicitObjective{\component}{\index}{\secondIndex}}
\def\regulatorInPartial{\vectorElement{\partialRegulatorState}{\regulator}}
\def\someValueChange{\explicitTransitionValueChange{\component}{\myValue}{\secondValue}}
\def\somePartialTransition{\explicitPartialTransition{\component}{\myValue}{\secondValue}{\partialRegulatorState}}

\begin{definition}[Objective Transition Set]
\label{def:objective_transitions}
Let $\dprn$ be an DPRN parametrised by $\parametrisationSet$,
and let $\someObjective$ be an objective for $\component \in \componentSet$.
The \emph{objective transition set} $\objectiveTransitionSet{\someObjective}$
is defined as
$\objectiveTransitionSet{\someObjective} \define\emptyset$ whenever $i=j$, otherwise,
\begin{align*}
    \objectiveTransitionSet{\someObjective} \define
    \{\somePartialTransition &\mid
  \transitionSign{\someValueChange} = \objectiveSign{\someObjective} \wedge
  \partialRegulatorState \in
  \regulationCoverSet{\someValueChange}
  \\ & \quad
  \wedge
  \max{\{\myValue, \secondValue\}} \leq \max{\{\index, \secondIndex\}}
  \wedge
  \min{\{\myValue, \secondValue\}} \geq \min{\{\index, \secondIndex\}}
  \}
\end{align*}

\noindent
Given an initial state $\state \in \stateSet{\dprn}$, the
\emph{valid objective transition set} of an objective $\someObjective$
in state $\state$ is a subset of the objective transition set
$\validObjectiveTransitionSet{\state}{\someObjective}
 \subseteq \objectiveTransitionSet{\someObjective}$
such that:
$\partialTransitionTuple \in \validObjectiveTransitionSet{\state}{\someObjective}
 \equivalenceDefine \forall \regulator \in \regulatorSet{\component}:
\regulatorInPartial \neq \anyValue \Rightarrow
\objectiveValues{\vectorElement{\state}{\regulator}}{\regulatorInPartial}$
is valid in state $\state$.

\noindent
The (valid) objective transition sets extend to sets of objectives in
the natural manner:
$\objectiveTransitionSet{\objectiveSet} = \bigcup_{\objective \in \objectiveSet}
\objectiveTransitionSet{\objective}$.

\end{definition}
}

Remark that the definition of a valid objective transition set benefits
from the use of partial regulator states.
Indeed, instead of having to check validity of an objective for each regulator,
only the minimal necessary subset of regulators is considered.
Checking objective validity consists of searching for a realisable trace,
which translates to finding all possible extensions (enabled transitions)
of a trace.
As enabled transitions can be retrieved using $\abstractSemantics$
without explicitly enumerating the parametrisations, the validity check is
compatible with $\abstractSemantics$.

The goal-oriented reduction of DPRNs can then be defined by recursively
collecting objectives from partial transitions ($\reducedObjectiveSet$)
and refining the component activation and inhibition limits accordingly.


{ 
\def\regulatorInPartial{\vectorElement{\partialRegulatorState}{\regulator}}

\begin{definition}[Reduction Procedure]
\label{def:reduction_procedure}
The goal-oriented reduction of a \newline DPRN $\dprn=\dprnTuple$
for an initial state $\state \in \stateSet{\dprn}$ and a goal $\goal$
is the DPRN ${\dprn}' = (\prn, {\activationLimitVector}', {\inhibitionLimitVector}')$
with ${\activationLimitVector}'$ and ${\inhibitionLimitVector}'$
being defined as follows, $\forall \component \in \componentSet,
\forall \regulatorState \in \regulatorStateSet{\component}$:
\begin{align*}
  \vectorElement{{\activationLimitVector}'}{\regulatorState} =&
  \max (\setDescription{\myValue \in \domain{\component}}
  {\exists \explicitPartialTransition{\component}{\myValue - 1}{\myValue}
  {\partialRegulatorState} \in
  \validObjectiveTransitionSet{\state}{\reducedObjectiveSet}:
  \regulatorState \in \partialRegulatorState} \cup \singletonSet{\minusInfinity}) \\
  \vectorElement{{\inhibitionLimitVector}'}{\regulatorState} =&
  \min (\setDescription{\myValue \in \domain{\component}}
  {\exists \explicitPartialTransition{\component}{\myValue + 1}{\myValue}
  {\partialRegulatorState} \in
  \validObjectiveTransitionSet{\state}{\reducedObjectiveSet}:
  \regulatorState \in \partialRegulatorState} \cup \singletonSet{\infinity})
\end{align*}
where $\reducedObjectiveSet$ is the smallest set of objectives satisfying the following:
\begin{enumerate}
	\item $\objectiveValues{\vectorElement{\state}{\goalComponent}}{\goalValue}
	 \in \reducedObjectiveSet$
	\item $\forall \objective \in \reducedObjectiveSet:
	\forall \fullPartialTransitionTuple \in
	\validObjectiveTransitionSet{\state}{\objective}:
	\forall \regulator \in \regulatorSet{\component} \setminus
	\singletonSet{\component}: \regulatorInPartial \neq \anyValue \Rightarrow
	\objectiveValues{\vectorElement{\state}{\regulator}}{\regulatorInPartial} \in
	\reducedObjectiveSet$
	\item $\forall \objective \in \reducedObjectiveSet:
	\forall \fullPartialTransitionTuple \in
	\validObjectiveTransitionSet{\state}{\objective}:
	\forall \explicitObjective{\component}{\index}{\secondIndex} \neq
	\objective \in \reducedObjectiveSet:
	\explicitObjective{\component}{\secondValue}{\secondIndex} \in
	\reducedObjectiveSet$\enspace.
\end{enumerate}

\end{definition}
}


{ 

\begin{example}
\label{exp:reduction_procedure}

Consider the parametric regulatory network $\prn$ introduced in
Example~\ref{exp:parametric_regulatory_network} converted to a DPRN
$\dprn=\dprnTuple$ in an unrestrictive manner
($\activationLimitVector = {\singletonSet{1}}^{\componentSet}$ and
$\inhibitionLimitVector = {\singletonSet{0}}^{\componentSet}$),
and a parametrisation set containing only two parametrisations
$\parametrisationSet=\{\parametrisation, {\parametrisation}'\}$, where
$\parametrisation$ is the parametrisation from
Example~\ref{exp:parametric_regulatory_network} and ${\parametrisation}'$
differs from $\parametrisation$ only in value of
$\parameter{{\parametrisation}'}{a}{\langle b=1,c=0,d=0 \rangle} = 0$.
Furthermore, let $a=1$ be a goal and $\state = \langle a=0,b=0,c=0,d=0 \rangle$ an initial state.

In Figure~\ref{fig:state_space_graph} we recall the dynamics of $\dprn$
given as a state space graph. Note that the second parametrisation
${\parametrisation}'$ is also shown within the graph as opposed to the one in
Example~\ref{exp:parametric_regulatory_network}.

\begin{figure}[t]
  \centering

{ 

\begin{tikzpicture}[ampersand replacement=\&]
\matrix (m) [matrix of math nodes,row sep=0.5em,column sep=1.5em] {
   \&      \&    \&      \&    \& {} \&    \&      \&    \&      \&    \\
   \& \bf{0000} \&    \& 1000 \&    \&    \&    \& 0001 \&    \& 1001 \&    \\
   \&      \&    \&      \& {} \&    \& {} \&      \&    \&      \&    \\
   \& \bf{0010} \&    \& \bf{1010} \&    \&    \&    \& 0011 \&    \& 1011 \&    \\
   \&      \& {} \&      \&    \&    \&    \&      \& {} \&      \&    \\
{} \&      \&    \&      \&    \&    \&    \&      \&    \&      \& {} \\
   \&      \& {} \&      \&    \&    \&    \&      \& {} \&      \&    \\
   \& \bf{0100} \&    \& \bf{1100} \&    \&    \&    \& 0101 \&    \& 1101 \&    \\
   \&      \&    \&      \& {} \&    \& {} \&      \&    \&      \&    \\
   \& \bf{0110} \&    \& \bf{1110} \&    \&    \&    \& 0111 \&    \& 1111 \&    \\
};


\path[->,every node/.style={font=\scriptsize}]
(m-5-3) edge[bend right, thick] node[left] {$b+$} (m-7-3)
(m-7-3) edge[bend right, thick] node[right] {$b-$} (m-5-3)

(m-5-9) edge[bend right] node[left] {$b+$} (m-7-9)
(m-7-9) edge[bend right] node[right] {$b-$} (m-5-9)

(m-3-5) edge[bend right] node[below] {$d+$} (m-3-7)
(m-3-7) edge[bend right] node[above] {$d-$} (m-3-5)

(m-9-5) edge[bend right] node[below] {$d+$} (m-9-7)
(m-9-7) edge[bend right] node[above] {$d-$} (m-9-5)





(m-8-2) edge[thick] node [right] {$c+$} (m-10-2)
(m-8-4) edge node [right] {$c+$} (m-10-4)
(m-8-8) edge node [right] {$c+$} (m-10-8)
(m-8-10) edge node [right] {$c+$} (m-10-10)

(m-4-2) edge node [right] {$c-$} (m-2-2)
(m-4-4) edge node [right] {$c-$} (m-2-4)
(m-4-8) edge node [right] {$c-$} (m-2-8)
(m-4-10) edge node [right] {$c-$} (m-2-10)

(m-4-2) edge[thick] node[above] {$a+$} (m-4-4)
(m-8-2) edge[bend left=10, thick] node[above] {$a+, \parametrisation$} (m-8-4)
(m-10-2) edge[thick] node[above] {$a+$} (m-10-4)

(m-2-4) edge node[above] {$a-$} (m-2-2)
(m-2-10) edge node[above] {$a-$} (m-2-8)
(m-4-10) edge node[above] {$a-$} (m-4-8)
(m-8-4) edge[bend left=10] node[below] {$a-, {\parametrisation}'$} (m-8-2)
(m-8-10) edge node[above] {$a-$} (m-8-8)
(m-10-10) edge node[above] {$a-$} (m-10-8)
;
\end{tikzpicture}
}
  \caption{States and transitions of
  $(\dprn, \{\parametrisation,{\parametrisation}'\})$ depicted as
  nodes and edges of a state space graph respectively.
  Transitions changing the value of $b$ and $d$ are displayed schematically.
  Transitions only enabled by a single of the parametrisations are marked
  accordingly.
  Bold font and lines indicate states and transitions used by at least
  one minimal trace from the initial state to the goal.
  \label{fig:state_space_graph}}
\end{figure}
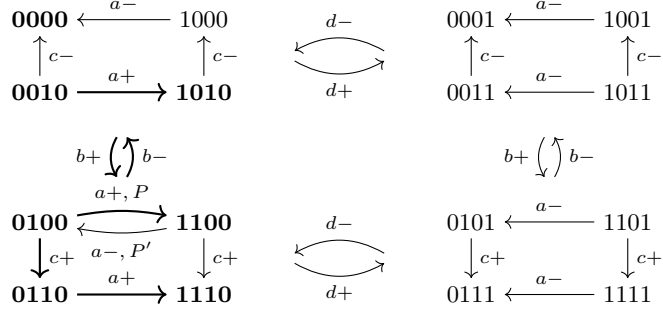

In our example, there are three minimal traces from the initial state $\state$
reaching the goal $a=1$:
\begin{align*}
  &\langle0000\rangle \xrightarrow{b+,\langle 0 \rangle}
    \langle0100\rangle \xrightarrow{a+,\langle 1 0 0 \rangle}
    \langle1100\rangle\\
  &\langle0000\rangle \xrightarrow{b+,\langle 0 \rangle}
    \langle0100\rangle \xrightarrow{c+,\langle 1 \rangle}
    \langle0110\rangle \xrightarrow{a+,\langle 1 1 0 \rangle}
    \langle1110\rangle\\
  &\langle0000\rangle \xrightarrow{b+,\langle 0 \rangle}
    \langle0100\rangle \xrightarrow{c+,\langle 1 \rangle}
    \langle0110\rangle \xrightarrow{b-,\langle 1 \rangle}
    \langle0010\rangle \xrightarrow{a+,\langle 0 1 0 \rangle}
    \langle1010\rangle
\end{align*}

All the listed traces share a common prefix, however, they are all minimal
as a different regulator state is used to activate $a$ each time, thus each of the
traces has at least one unique transition.
One may further remark that the first (shortest) minimal path is only available
under parametrisation $\parametrisation$, however,
thanks to Property~\ref{ppt:minimal_trace_independence}
this has no impact on the reduction procedure itself.

Observe that node $d$ never activates in any of the minimal traces.
This follows from the fact that $a$ is never allowed to activate while $d$ is active.
Thus, if $d$ activates it has to deactivate again before the goal can be reached.
As $d$ has no impact on the value of the other components besides $a$,
such an activation and deactivation loop can always be stripped from the trace
to obtain a smaller trace, unlike the loop by $b$ in the third (longest) trace,
which is necessary for the activation of $c$.
One might thus expect the activation of $d$ to be pruned during the reduction
procedure, which is, indeed the case:

We start with $\reducedObjectiveSet \assign \singletonSet{\explicitObjective{a}{0}{1}}$
according to rule (1) of Definition~\ref{def:reduction_procedure}.

Inference of the regulator cover set used for
$\validObjectiveTransitionSet{\state}{\explicitObjective{a}{0}{1}} =
\setDescription{\explicitPartialTransition{a}{0}{1}{\partialRegulatorState}}
{\partialRegulatorState \in
{\partialRegulatorStateSubset}_{\explicitTransitionValueChange{a}{0}{1}}} =
\{(\explicitTransitionValueChange{a}{0}{1}, \langle 100 \rangle),
(\explicitTransitionValueChange{a}{0}{1}, \langle 010 \rangle),
(\explicitTransitionValueChange{a}{0}{1}, \langle 110 \rangle)\}$
is illustrated in Example~\ref{exp:regulation_cover_set}.
Then, by rule (2) of Definition~\ref{def:reduction_procedure}, the following
objectives are included in $\reducedObjectiveSet \assign \reducedObjectiveSet
\cup \{\explicitObjective{b}{0}{0}, \explicitObjective{b}{0}{1},
\explicitObjective{c}{0}{0}, \explicitObjective{c}{0}{1},
\explicitObjective{d}{0}{0}\}$.

For arbitrary component $v$, the objective $\explicitObjective{v}{0}{0}$
has an empty valid transition set $\validObjectiveTransitionSet{\state}
{\explicitObjective{v}{0}{0}}=\emptyset$ and thus neither of rules (2) or (3)
are applicable.
For the the remaining $\explicitObjective{b}{0}{1}$ and
$\explicitObjective{c}{0}{1}$ rule (2) produces only duplicate objectives
($\explicitObjective{b}{0}{0}$ and $\explicitObjective{b}{0}{1}$, respectively).
Rule (3), however, may be applied to $\explicitObjective{b}{0}{1}$ and
$\explicitObjective{c}{0}{1}$ to bridge them to $\explicitObjective{b}{0}{0}$
and $\explicitObjective{c}{0}{0}$, respectively, to include objectives
$\reducedObjectiveSet \assign \reducedObjectiveSet \cup
\{\explicitObjective{b}{1}{0}, \explicitObjective{c}{1}{0}\}$.

Only duplicate objectives are obtained by application of either rule (2) or (3) on the newly added $\explicitObjective{b}{1}{0}$ and $\explicitObjective{c}{1}{0}$.
Thus, the reduction concludes with
$\reducedObjectiveSet = \{\explicitObjective{a}{0}{1},
\explicitObjective{b}{0}{0}, \explicitObjective{b}{0}{1},
\explicitObjective{b}{1}{0}, \explicitObjective{c}{0}{0},
\explicitObjective{c}{0}{1}, \explicitObjective{c}{1}{0},
\explicitObjective{d}{0}{0}\}$,
with valid transition set
$\validObjectiveTransitionSet{\state}{\reducedObjectiveSet} =
\{(\explicitTransitionValueChange{a}{0}{1}, \langle 100 \rangle),
(\explicitTransitionValueChange{a}{0}{1}, \langle 010 \rangle),
(\explicitTransitionValueChange{a}{0}{1}, \langle 110 \rangle),
(\explicitTransitionValueChange{b}{0}{1}, \langle 0 \rangle),
(\explicitTransitionValueChange{b}{1}{0}, \langle 1 \rangle),
(\explicitTransitionValueChange{c}{0}{1}, \langle 1 \rangle),
(\explicitTransitionValueChange{c}{1}{0}, \langle 0 \rangle)\}$.
One may observe that the computed transition set indeed covers all
the transitions used by any of the minimal traces
(thick edges in Figure~\ref{fig:state_space_graph}).

Finally, the limit vectors for the new DPRN ${\dprn}' =
(\prn, {\activationLimitVector}', {\inhibitionLimitVector}')$
are computed as follows:
\begin{align*}
  {\activationLimitVector}' &= \langle a=1, b=1, c=1, d=\minusInfinity \rangle\\
  {\inhibitionLimitVector}' &= \langle a=\infinity, b=0, c=0, d=\infinity \rangle
\end{align*}

Observe that component $d$ is indeed completely forbidden from acting in the
reduced model, considerably decreasing the reachable state space that
has to be explored.
Notice that deactivation of $a$ is also disabled, however,
in our Boolean case this has no practical effect w.r.t. reachability of $a=1$.

\end{example}
}


\subsection{Correctness}
\label{sec:reduction_correctness}

Following the interpretation of the reduction procedure and thanks to the
monotonicity of value updating, a transition $\transitionTuple$
remains enabled in ${\dprn}'$ iff at least one partial transition
$\partialTransitionTuple$ exists in
$\validObjectiveTransitionSet{\state}{\reducedObjectiveSet}$
with $\regulatorState \in \partialRegulatorState$.
This leads us to formulate the soundness theorem of the reduction procedure,
guaranteeing that all transitions of all minimal traces are preserved
and thus, in turn, all minimal traces are preserved.


{ 

\begin{theorem}
\label{thm:minimal_trace_preservation}

Let $\dprn$ be a DPRN,
and let a realisable trace $\trace$ of $\dprn$ be minimal for an initial state
$\state \in \stateSet{\dprn}$ and goal $\goal$.
Then, for any transition $\transitionTuple \in \collapse{\trace}$
there exists at least one partial transition
$(\transitionValueChange, \partialRegulatorState) \in
\objectiveTransitionSet{\reducedObjectiveSet}$
such that $\regulatorState \in \partialRegulatorState$,
where $\reducedObjectiveSet$
is constructed according to Definition~\ref{def:reduction_procedure}.

\end{theorem}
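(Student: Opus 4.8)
The plan is to reduce the claim to exhibiting, for each transition of $\trace$, a single \emph{matching} objective already contained in $\reducedObjectiveSet$, and then to produce these objectives by traversing the trace backwards from the goal, the traversal being driven by the minimality hypothesis.

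First I would perform the reduction. Fix $\transitionTuple\in\collapse{\trace}$ with value change $\transitionValueChange=\explicitTransitionValueChange{\component}{\index}{\secondIndex}$. Since $\trace$ is realisable, $\transitionTuple$ is enabled under $\parametrisationSet$ in the regulator state $\regulatorState$ it uses, so the first condition of Definition~\ref{def:regulation_cover_set} yields a partial regulator state $\partialRegulatorState\in\regulationCoverSet{\transitionValueChange}$ with $\regulatorState\in\partialRegulatorState$. Hence, by Definition~\ref{def:objective_transitions}, it suffices to find a single objective $\objective\in\reducedObjectiveSet$ for $\component$ whose sign equals $\transitionSign{\transitionValueChange}$ and whose value range contains both $\index$ and $\secondIndex$: then $(\transitionValueChange,\partialRegulatorState)\in\objectiveTransitionSet{\objective}\subseteq\objectiveTransitionSet{\reducedObjectiveSet}$, which is exactly the required partial transition.

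Next I would set up a reverse induction along $\trace$. The guiding claim is that every transition is \emph{demanded}: the terminal change of $\goalComponent$ to $\goalValue$ is demanded by the goal objective $\objectiveValues{\vectorElement{\state}{\goalComponent}}{\goalValue}$, which lies in $\reducedObjectiveSet$ by rule~(1); and any transition at a position $p$ changing a component $\secondComponent$ has the value it writes read, at a strictly later position $p'>p$, by a transition having $\secondComponent$ as a regulator. The latter follows from minimality: if no such consumer existed, the transition (together with the matching inverse change, should $\secondComponent$ later return) could be excised, producing a strictly shorter subsequence that still reaches $\goal$ and is still a trace, contradicting Definition~\ref{def:minimal_trace}. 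Because each consumer lies strictly later, the reverse induction on the position (from $\size{\trace}$ down to $1$) is well-founded: when a transition is treated, its consumer has already been matched.

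The inductive step propagates objectives backwards. Let the consumer at $p'$ be matched by $\objective'\in\reducedObjectiveSet$. By the regulator-information property (first item of Definition~\ref{def:regulation_cover_set}) some covering partial regulator state pins $\secondComponent$ to the value written at $p$; this pinning partial transition belongs to $\validObjectiveTransitionSet{\state}{\objective'}$ because the relevant sub-objectives are realised by $\trace$ itself, so rule~(2) places the objective carrying $\secondComponent$ from $\vectorElement{\state}{\secondComponent}$ to that value into $\reducedObjectiveSet$. If $\secondComponent$ moves monotonically in $\trace$, this objective already covers the transition at $p$; if $\secondComponent$ oscillates --- as $b$ does in Example~\ref{exp:reduction_procedure} --- the successive demanded targets of $\secondComponent$ generate a chain of objectives that rule~(3) bridges end to end, and one checks by scanning the consecutive $\secondComponent$-changes of $\trace$ that each single change lies in the range of one link of that chain. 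Iterating over the finitely many backward steps matches every transition of $\trace$.

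The main obstacle is exactly this non-monotone behaviour. A single objective has a fixed sign and therefore covers only a monotone run of value changes, so for a component that rises and falls one must argue that the objectives injected by successive consumers, once closed under the bridging rule~(3), reproduce precisely the monotone sub-objectives realised by the trace. Making the consumer assignment simultaneously \emph{total} (so that every transition admits a demand chain back to the goal) and \emph{well-founded} is the delicate point, and it is where minimality is indispensable --- it is what forbids the strippable activation/deactivation loops, such as the loop by $d$ in Example~\ref{exp:reduction_procedure}, that would otherwise carry transitions matched by no objective. By comparison, the sign-and-range bookkeeping and the appeal to the smallest-fixpoint definition of $\reducedObjectiveSet$ are routine once the chain of demands is in place.
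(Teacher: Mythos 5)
Your first-paragraph reduction (it suffices to exhibit one objective in $\reducedObjectiveSet$ of the right sign whose range contains $\index$ and $\secondIndex$; the cover set then supplies the partial regulator state) is sound and matches the paper's preliminary lemma. The difficulty lies entirely in producing that objective, and there your backward induction rests on a claim that is not established and is in fact false as stated: that every transition of a minimal trace has a \emph{consumer}, i.e.\ a strictly later transition reading the value it writes through an influence edge. For a multi-valued component $\secondComponent$ that climbs $0 \to 1 \to 2$, the value $1$ may be read by nothing at all --- the step $0\to 1$ is needed only because PRN transitions are unitary and $\secondComponent$ must pass through $1$ to reach $2$ --- so it has no consumer in your sense, yet it cannot be excised. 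Repairing this forces you to widen the demand relation to same-component succession and to excise entire loops or tails rather than single transitions; but then the excised segment may contain intermediate values of $\secondComponent$ that \emph{are} read by other components' transitions, and those dependent transitions must in turn be shown removable. That cascade is precisely what the second half of the paper's proof discharges (it shows every transition causally depending on a removed one is itself uncovered, recurses, and terminates by finiteness of $\trace$); your excision argument silently assumes it away. Note also that rule~(2) of Definition~\ref{def:reduction_procedure} quantifies over $\regulator \in \regulatorSet{\component}\setminus\singletonSet{\component}$, so a demand routed through a self-regulation produces no objective by rule~(2), another case your propagation step does not cover.

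The second gap is the oscillation case, which you correctly identify as the crux but then leave at ``one checks by scanning''. That check is the actual content of the theorem: rule~(3) only adds $\objectiveValues{\secondValue}{\secondIndex}$ for pairs of objectives already present in $\reducedObjectiveSet$ and for endpoints $\secondValue$ of \emph{valid} partial transitions, and one must verify that the closure so generated contains, for every reverse step of an oscillating component, an objective of the right sign whose range straddles that step --- the paper does this with an explicit argument about intermediate values and matching signs of the objectives already forced into $\reducedObjectiveSet$. Without carrying out that analysis, and without the corrected demand relation feeding it the right objectives, the inductive step does not go through. Your architecture --- a direct well-founded induction on a demand relation rather than the paper's proof by contradiction via loop removal --- is attractive and could plausibly be completed, but the two places where you defer to minimality and to ``scanning'' are exactly where the paper's proof does its work, so as written the proposal has genuine gaps rather than constituting an alternative proof.
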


}

The proof of the theorem relies on showing that any transition
which is not preserved is part of a cycle on any trace leading to the goal,
and as consequence does not belong to any minimal trace.
Due to the complexity, the formal proof is given in
Appendix~\ref{apx:minimal_trace_preservation_proof}.


\section{Regulation Cover Set Inference}
\label{sec:regulation_cover_set_inference}

In this section we introduce a heuristic for construction of regulation cover sets whose size,
w.r.t. specified regulator values across all partial regulator states,
does not exceed the size of the concrete regulation cover set.

Let ${\partialRegulatorStateSubset}_{\textsc{ena}} =
\setDescription{\partialRegulatorState \in \partialRegulatorStateSet{\component}}
{\forall \regulatorState \in \partialRegulatorState:
\transitionTuple \ \text{is enabled}}$ be the set of all
partial regulators states which contain no bad regulator states.
For each $\index \in \valueSet{0}{\size{\regulatorSet{\component}}}$
let ${\partialRegulatorStateSubset}_{\index} = \setDescription
{\partialRegulatorState \in \partialRegulatorStateSet{\component}}
{\size{\setDescription{\regulator \in \regulatorSet{\component}}
{\vectorElement{\partialRegulatorState}{\regulator} = \anyValue}} = \index}$
to be the set of all partial regulator states with exactly $\index$
regulator values equal to $\anyValue$.

The algorithm consists of choosing partial regulator state set,
${\partialRegulatorStateSubset}_{\textsc{ext}}$, to cover each
(concrete) regulator state enabling the value change.
This is done separately for each regulator state in an increasing
order of a weight function.
The weight function represents flexibility of covering the regulator state,
i.e. there are more partial regulator states in
${\partialRegulatorStateSubset}_{\textsc{ena}}$ containing a regulator state
with a larger weight than the ones covering regulator state with smaller weight.
The weights are dynamic as the partial regulator states get removed
(${\partialRegulatorStateSubset}_{\textsc{rmv}}$) throughout the algorithm.
The ${\partialRegulatorStateSubset}_{\textsc{ext}}$ for each regulator state
is computed by testing candidate sets of partial regulator states
from ${\partialRegulatorStateSubset}_{\index}$ in decreasing order on $\index$.
A cover set for each regulator state is guaranteed to exist as for $\index = 0$
the candidate set is a singleton set containing the regulator state itself.
Once a suitable cover set is found for a particular regulator state,
it is included in the regulation cover set
${\partialRegulatorStateSubset}_{\transitionValueChange}$
and all partial regulator states containing
the regulator state are excluded from further computation.

As the weight function gives only a partial order on the regulator states,
the algorithm is forced to make nondeterministic choices.
This occurs, however, only in cases when the choices are isomorphic.
As such, the partial order given by weights can be extended to a total order
arbitrarily, e.g. by underlying lexicographic order.
The pseudocode of the algorithm to construct regulation cover sets
is given in Algorithm \ref{alg:cover_set_computation}.


\begin{algorithm}[t]
  \begin{algorithmic}
    \Function{Weight}{$\regulatorState$}
      \State \Return $\size{\setDescription{\partialRegulatorState \in
      ({\partialRegulatorStateSubset}_{1} \cap
      {\partialRegulatorStateSubset}_{\textsc{ena}}) \setminus
      {\partialRegulatorStateSubset}_{\textsc{rmv}}}
      {\regulatorState \in \partialRegulatorState}} +
      \frac{\size{\setDescription{\partialRegulatorState \in
      {\partialRegulatorStateSubset}_{1} \cap
      {\partialRegulatorStateSubset}_{\textsc{ena}}}
      {\regulatorState \in \partialRegulatorState}}}
      {\size{\regulatorSet{\component}} + 1}$
    \EndFunction
    \State
    \Function{ComputeCoverSet}
    {$\transitionValueChange =
    \explicitTransitionValueChange{\component}{\myValue}{\secondValue}$}
      \State ${\partialRegulatorStateSubset}_{\transitionValueChange}
      \gets \emptyset$
      \State ${\partialRegulatorStateSubset}_{\textsc{rmv}}
      \gets \emptyset$

      \While{${\partialRegulatorStateSubset}_{0} \neq \emptyset$}
        \State $\regulatorState \gets
        {\regulatorState}' \in ({\partialRegulatorStateSubset}_{0} \cap
        {\partialRegulatorStateSubset}_{\textsc{ena}}) \setminus
        {\partialRegulatorStateSubset}_{\textsc{rmv}}$
        \State\hspace{5mm}
       with \Call{Weight}{${\regulatorState}'$}
        $= \textsc{min}\{$\Call{Weight}{${\regulatorState}''$}
        $ \mid {\regulatorState}'' \in ({\partialRegulatorStateSubset}_{0} \cap
        {\partialRegulatorStateSubset}_{\textsc{ena}}) \setminus
        {\partialRegulatorStateSubset}_{\textsc{rmv}} \}$
        \State ${\partialRegulatorStateSubset}_{\textsc{ext}} \gets \emptyset$
        \State $\index \gets \size{\regulatorSet{\component}} - 1$
        \While{$\omega$ is not covered by
        ${\partialRegulatorStateSubset}_{\transitionValueChange} \cup
        {\partialRegulatorStateSubset}_{\textsc{ext}}$}
            \State ${\partialRegulatorStateSubset}_{\textsc{ext}} \gets
              ({\partialRegulatorStateSubset}_{\index} \cap
              {\partialRegulatorStateSubset}_{\textsc{ena}}) \setminus
              {\partialRegulatorStateSubset}_{\textsc{rmv}}$
            \State $\index \gets \index - 1$
        \EndWhile
        \State ${\partialRegulatorStateSubset}_{\transitionValueChange} \gets
        {\partialRegulatorStateSubset}_{\transitionValueChange} \cup
        {\partialRegulatorStateSubset}_{\textsc{ext}}$
        \State ${\partialRegulatorStateSubset}_{\textsc{rmv}} \gets
        {\partialRegulatorStateSubset}_{\textsc{rmv}} \cup
        \setDescription{\partialRegulatorState \in
        \partialRegulatorStateSet{\component}}
        {\regulatorState \in \partialRegulatorState}$
      \EndWhile

      \State \Return ${\partialRegulatorStateSubset}_{\transitionValueChange}$
    \EndFunction
  \end{algorithmic}
  \caption{Pseudocode of the algorithm computing regulation cover set.
  \label{alg:cover_set_computation}
  }
\end{algorithm}

The correctness of the algorithm comes directly from the construction.
No bad states may be included as the algorithm works only with the set of
partial regulator states which include no bad states.
On the other hand, all regulator states which enable the value change are
fully covered as the algorithm ensures this for each of them individually.

The resulting cover set computed by Algorithm \ref{alg:cover_set_computation}
contains no more explicit regulator value specifications than the concrete
regulation cover set.
This is a consequence of the order of regulator states covering.
Suppose a regulator state $\regulatorState$ is covered by
several partial regulator states which contain more regulator value
specifications than $\regulatorState$ itself.
Each partial regulator state
$\partialRegulatorState \in {\partialRegulatorStateSubset}_{1}$
with $\vectorElement{\partialRegulatorState}{\regulator} = \anyValue$
is shared with exactly $\maximum{\regulator} - 1$ other regulator states.
Thus, the partial regulator states included to cover $\regulatorState$
can be utilised while covering $\maximum{\regulator} - 1$ other regulator states.
Finally, since $\textsc{Weight}(\regulatorState) \geq 2$ is the smallest
weight among all uncovered regulator states,
all the other uncovered regulator states are also sharing partial regulator
states among themselves, thus closing the loop and guaranteeing the
regulator value specification debt eventually gets "payed off".

The fractional part of the weight function is included to introduce bias
towards states that have less partial regulator states in the beginning
(due to sharing with more bad states).
If there are two regulator states $\regulatorState$ and ${\regulatorState}'$ such that
$\lfloor \textsc{Weight}(\regulatorState) \rfloor =
  \lfloor \textsc{Weight}({\regulatorState}') \rfloor$ but
$\textsc{Weight}(\regulatorState) < \textsc{Weight}({\regulatorState}')$,
we know that both of them have equally many partial regulator states to choose
from for their respective cover sets.
However, more of the partial regulator states containing ${\regulatorState}'$
have been removed and thus, quite possibly included in the regulation cover set
${\partialRegulatorStateSubset}_{\transitionValueChange}$.
${\regulatorState}'$ is therefore in all likelihood already covered
to a higher degree than $\regulatorState$ and possibly, has more covering options.
The bias thus ensures $\regulatorState$ is covered first in order to avoid
introducing potentially redundant partial regulator states into the regulation
cover set.

Both principles making up the weight function are illustrated in
Example~\ref{exp:regulation_cover_set}.


{ 
\def\someRegState{\orderedCouple{\component}{\regulatorState}}
\def\orderedTuple#1{\langle #1 \rangle}

\begin{example}
\label{exp:regulation_cover_set}

Consider the same directed parametric regulatory network $\dprn$ as in
Example~\ref{exp:reduction_procedure}.

We now show the regulation cover set computation for value changes of component $a$.
Let us start with $\explicitTransitionValueChange{a}{0}{1}$.
The initial configuration and first two iterations, consisting of
covering of the first two regulator states, of the algorithm are
schematically depicted in Figure~\ref{fig:activation_regulation_cover_set}.

\begin{figure}[t]
  \centering
  \begin{subfigure}{.3\linewidth}

{ 

\begin{tikzpicture}[ampersand replacement=\&]
\tikzset{fore/.style={preaction={draw=white, -,line width=4pt}}}
\matrix (m) [matrix of math nodes,row sep=2em,column sep=2em] {
    \& \bf{110} \&          \\
111 \& \bf{100} \& \bf{010} \\
101 \& 011      \& 000      \\
    \& 001      \&          \\
};

\path[]
(m-1-2) edge[fore, line width=2pt] (m-2-2)
(m-2-2) edge[fore] (m-3-1)
(m-2-2) edge[fore] (m-3-3)

(m-1-2) edge[fore] node[above left=-0.1cm] {$1 1 \anyValue$} (m-2-1)
(m-1-2) edge[fore, line width=2pt] (m-2-3)
(m-2-1) edge[fore] node[left=-0.1cm] {$1 \anyValue 1$} (m-3-1)
(m-2-3) edge[fore] (m-3-3)
(m-3-1) edge[fore] node[below left=-0.1cm] {$\anyValue 0 1$} (m-4-2)
(m-3-3) edge[fore] (m-4-2)

(m-2-1) edge[fore] (m-3-2)
(m-2-3) edge[fore] (m-3-2)
(m-3-2) edge[fore] (m-4-2)
;

\end{tikzpicture}
}
    \caption{Initial configuration}
  \end{subfigure}
  \begin{subfigure}{.3\linewidth}

{ 

\begin{tikzpicture}[ampersand replacement=\&]
\tikzset{fore/.style={preaction={draw=white, -,line width=4pt}}}
\matrix (m) [matrix of math nodes,row sep=2em,column sep=2em] {
    \& \bf{110} \&          \\
111 \& \bf{100} \& \underline{\bf{010}} \\
101 \& 011      \& 000      \\
    \& 001      \&          \\
};

\path[]
(m-1-2) edge[fore, line width=2pt] (m-2-2)
(m-2-2) edge[fore] (m-3-1)
(m-2-2) edge[fore] (m-3-3)

(m-1-2) edge[fore] (m-2-1)
(m-1-2) edge[fore, line width=2pt, dashed] (m-2-3)
(m-2-1) edge[fore] (m-3-1)
(m-2-3) edge[fore, dashed] (m-3-3)
(m-3-1) edge[fore] (m-4-2)
(m-3-3) edge[fore] (m-4-2)

(m-2-1) edge[fore] (m-3-2)
(m-2-3) edge[fore, dashed] (m-3-2)
(m-3-2) edge[fore] (m-4-2)
;

\end{tikzpicture}
}
    \caption{Configuration after one iteration.}
  \end{subfigure}
  \begin{subfigure}{.3\linewidth}

{ 

\begin{tikzpicture}[ampersand replacement=\&]
\tikzset{fore/.style={preaction={draw=white, -,line width=4pt}}}
\matrix (m) [matrix of math nodes,row sep=2em,column sep=2em] {
    \& \bf{110} \&          \\
111 \& \underline{\bf{100}} \& \bf{010} \\
101 \& 011      \& 000      \\
    \& 001      \&          \\
};

\path[]
(m-1-2) edge[fore, line width=2pt, dashed] (m-2-2)
(m-2-2) edge[fore, dashed] (m-3-1)
(m-2-2) edge[fore, dashed] (m-3-3)

(m-1-2) edge[fore] (m-2-1)
(m-1-2) edge[fore, line width=2pt, dashed] (m-2-3)
(m-2-1) edge[fore] (m-3-1)
(m-2-3) edge[fore, dashed] (m-3-3)
(m-3-1) edge[fore] (m-4-2)
(m-3-3) edge[fore] (m-4-2)

(m-2-1) edge[fore] (m-3-2)
(m-2-3) edge[fore, dashed] (m-3-2)
(m-3-2) edge[fore] (m-4-2)
;

\end{tikzpicture}
}
    \caption{Configuration after two iterations.}
  \end{subfigure}
  \caption{Regulator states of component $a$ during computation of regulation
  cover set for value change $\explicitTransitionValueChange{a}{0}{1}$.
  Only the leftmost edges in (a) are labelled by the corresponding partial regulator states
  $1 1 \anyValue, 1 \anyValue 1$ and $\anyValue 0 1$ for the sake of readability.
  Bold text and lines indicate (partial) regulator states which enable the
  value change (${\partialRegulatorStateSubset}_{\textsc{ena}}$).
  Underlined regulator state is the state covered in the respective iteration
  and dashed lines represent removed partial regulator states
  (${\partialRegulatorStateSubset}_{\textsc{rmv}}$).
  \label{fig:activation_regulation_cover_set}}
\end{figure}

Figure~\ref{fig:activation_regulation_cover_set}
lists all regulator states of component $a$ as nodes in a graph.
Bold font indicates the three regulator states which enable the increase of $a$.
The partial regulator states from ${\partialRegulatorStateSubset}_{1}$
correspond to edges in the graph, connecting contained regulator states.
Thick edges indicate partial regulator states which contain no bad
regulator states.
Partial regulator states from ${\partialRegulatorStateSubset}_{2}$ could in turn
be viewed as squares in the diagram, all of them containing at least one bad
regulator state in our case.
In the graphical representation of regulator states,
a partial regulator state belonging to ${\partialRegulatorStateSubset}_{\index}$
is a $\index$-dimensional hypercube in the Boolean case,
or a $\index$-dimensional hyper-rectangular cuboid in the general case.

The graph representation in Figure~\ref{fig:activation_regulation_cover_set}
allows for easy visualisation of the weight function.
The weight corresponds to number of thick, non-dashed edges
plus, the number of thick edges divided by $\size{\regulatorSet{a}} + 1$,
in our case $4$.
Consequently, in the initial configuration
(Figure~\ref{fig:activation_regulation_cover_set}~(a))
the regulator states $\orderedTuple{1 0 0}$ and $\orderedTuple{0 1 0}$
have equal (minimal) weight.
This is justified by their perfectly symmetrical position.

Figure~\ref{fig:activation_regulation_cover_set} illustrates the run of the
algorithm assuming lexicographic order is used to distinguish between
regulator states with equal weights.
In the first iteration $\orderedTuple{0 1 0}$ is covered using itself for
the extension set ${\partialRegulatorStateSubset}_{\textsc{ext}} =
\singletonSet{\orderedTuple{0 1 0}}$ as the only partial regulator state
with more unspecified regulator values, $\orderedTuple{\anyValue 1 0}$,
alone does not fully cover $\orderedTuple{0 1 0}$.
Figure~\ref{fig:activation_regulation_cover_set}~(b) depicts the situation
after the first iteration, including the removed partial regulator states
(dashed lines).

In the second iteration $\orderedTuple{1 0 0}$ is covered in the exact same fashion,
owning to the symmetric position w.r.t. $\orderedTuple{0 1 0}$.
The result is shown in Figure~\ref{fig:activation_regulation_cover_set}~(c).

No partial regulator states remain for the last regulator state
$\orderedTuple{1 1 0}$ except the regulator state itself.
Thus, $\orderedTuple{1 1 0}$ also gets covered explicitly.
The algorithm therefore concludes with the concrete regulation cover set
${\partialRegulatorStateSubset}_{\explicitTransitionValueChange{a}{0}{1}} =
\{\orderedTuple{0 1 0}, \orderedTuple{1 0 0}, \orderedTuple{1 1 0}\}$,
which, in fact, is the optimal solution in our case.

Let us now consider also the decreasing case
$\explicitTransitionValueChange{a}{1}{0}$.
Again, we illustrate the running of the algorithm using graph representation
of the regulator states of $a$.
All iterations up to the final one of the algorithm using lexicographic order
on regulator states of equal weight are given in
Figure~\ref{fig:inhibition_regulation_cover_set}.

\begin{figure}[t]
  \centering
  \begin{subfigure}{.3\linewidth}

{ 

\begin{tikzpicture}[ampersand replacement=\&]
\tikzset{fore/.style={preaction={draw=white, -,line width=4pt}}}
\matrix (m) [matrix of math nodes,row sep=2em,column sep=2em] {
         \& 110      \&          \\
\bf{111} \& \bf{100} \& 010      \\
\bf{101} \& \bf{011} \& \bf{000} \\
         \& \bf{001} \&          \\
};

\path[fill=gray!50,opacity=.35] (m-2-2.south) to (m-3-3.west) to (m-4-2.north)
  to (m-3-1.east) to (m-2-2.south);

\path[]
(m-1-2) edge[fore] (m-2-2)
(m-2-2) edge[fore, line width=2pt] (m-3-1)
(m-2-2) edge[fore, line width=2pt] (m-3-3);

\path[fill=gray!50,opacity=.35] (m-2-1.south east) to (m-3-2.north west)
to (m-4-2.north west) to (m-3-1.south east) to (m-2-1.south east);

\path[]
(m-1-2) edge[fore] (m-2-1)
(m-1-2) edge[fore] (m-2-3)
(m-2-1) edge[fore, line width=2pt] (m-3-1)
(m-2-3) edge[fore] (m-3-3)
(m-3-1) edge[fore, line width=2pt] (m-4-2)
(m-3-3) edge[fore, line width=2pt] (m-4-2)

(m-2-1) edge[fore, line width=2pt] (m-3-2)
(m-2-3) edge[fore] (m-3-2)
(m-3-2) edge[fore, line width=2pt] (m-4-2)
;

\end{tikzpicture}
}
    \caption{Initial configuration}
  \end{subfigure}
  \begin{subfigure}{.3\linewidth}

{ 

\begin{tikzpicture}[ampersand replacement=\&]
\tikzset{fore/.style={preaction={draw=white, -,line width=4pt}}}
\matrix (m) [matrix of math nodes,row sep=2em,column sep=2em] {
         \& 110      \&          \\
\bf{111} \& \bf{100} \& 010      \\
\bf{101} \& \bf{011} \& \underline{\bf{000}} \\
         \& \bf{001} \&          \\
};

\path[fill=gray!50, opacity=.35, pattern=north east lines] (m-2-2.south) to (m-3-3.west)
  to (m-4-2.north) to (m-3-1.east) to (m-2-2.south);

\path[]
(m-1-2) edge[fore] (m-2-2)
(m-2-2) edge[fore, line width=2pt] (m-3-1)
(m-2-2) edge[fore, line width=2pt, dashed, double] (m-3-3);

\path[fill=gray!50,opacity=.35] (m-2-1.south east) to (m-3-2.north west)
to (m-4-2.north west) to (m-3-1.south east) to (m-2-1.south east);

\path[]

(m-1-2) edge[fore] (m-2-1)
(m-1-2) edge[fore] (m-2-3)
(m-2-1) edge[fore, line width=2pt] (m-3-1)
(m-2-3) edge[fore, dashed] (m-3-3)
(m-3-1) edge[fore, line width=2pt] (m-4-2)
(m-3-3) edge[fore, line width=2pt, dashed, double] (m-4-2)

(m-2-1) edge[fore, line width=2pt] (m-3-2)
(m-2-3) edge[fore] (m-3-2)
(m-3-2) edge[fore, line width=2pt] (m-4-2)
;

\end{tikzpicture}
}
    \caption{Configuration after one iteration.}
  \end{subfigure}
  \begin{subfigure}{.3\linewidth}

{ 

\begin{tikzpicture}[ampersand replacement=\&]
\tikzset{fore/.style={preaction={draw=white, -,line width=4pt}}}
\matrix (m) [matrix of math nodes,row sep=2em,column sep=2em] {
         \& 110      \&          \\
\bf{111} \& \underline{\bf{100}} \& 010      \\
\bf{101} \& \bf{011} \& \bf{000} \\
         \& \bf{001} \&          \\
};

\path[fill=gray!50, opacity=.35, pattern=north east lines] (m-2-2.south) to (m-3-3.west)
  to (m-4-2.north) to (m-3-1.east) to (m-2-2.south);

\path[]
(m-1-2) edge[fore, dashed] (m-2-2)
(m-2-2) edge[fore, line width=2pt, dashed, double] (m-3-1)
(m-2-2) edge[fore, line width=2pt, dashed, double] (m-3-3);

\path[fill=gray!50,opacity=.35] (m-2-1.south east) to (m-3-2.north west)
to (m-4-2.north west) to (m-3-1.south east) to (m-2-1.south east);

\path[]

(m-1-2) edge[fore] (m-2-1)
(m-1-2) edge[fore] (m-2-3)
(m-2-1) edge[fore, line width=2pt] (m-3-1)
(m-2-3) edge[fore, dashed] (m-3-3)
(m-3-1) edge[fore, line width=2pt] (m-4-2)
(m-3-3) edge[fore, line width=2pt, dashed, double] (m-4-2)

(m-2-1) edge[fore, line width=2pt] (m-3-2)
(m-2-3) edge[fore] (m-3-2)
(m-3-2) edge[fore, line width=2pt] (m-4-2)
;

\end{tikzpicture}
}
    \caption{Configuration after two iterations.}
  \end{subfigure}
  \begin{subfigure}{.3\linewidth}

{ 

\begin{tikzpicture}[ampersand replacement=\&]
\tikzset{fore/.style={preaction={draw=white, -,line width=4pt}}}
\tikzset{foredouble/.style={preaction={draw=white, -,line width=3pt, double}}}
\matrix (m) [matrix of math nodes,row sep=2em,column sep=2em] {
         \& 110      \&          \\
\bf{111} \& \bf{100} \& 010      \\
\bf{101} \& \underline{\bf{011}} \& \bf{000} \\
         \& \bf{001} \&          \\
};

\path[fill=gray!50, opacity=.35, pattern=north east lines] (m-2-2.south)
  to (m-3-3.west) to (m-4-2.north) to (m-3-1.east) to (m-2-2.south);

\path[]
(m-1-2) edge[fore, dashed] (m-2-2)
(m-2-2) edge[foredouble, line width=2pt, dashed, double] (m-3-1)
(m-2-2) edge[foredouble, line width=2pt, dashed, double] (m-3-3);

\path[fill=gray!50, opacity=.35, pattern=vertical lines] (m-2-1.south east)
 to (m-3-2.north west) to (m-4-2.north west) to (m-3-1.south east)
 to (m-2-1.south east);

\path[]
(m-1-2) edge[fore] (m-2-1)
(m-1-2) edge[fore] (m-2-3)
(m-2-1) edge[fore, line width=2pt] (m-3-1)
(m-2-3) edge[fore, dashed] (m-3-3)
(m-3-1) edge[fore, line width=2pt] (m-4-2)
(m-3-3) edge[foredouble, line width=2pt, dashed, double] (m-4-2)

(m-2-1) edge[foredouble, line width=2pt, dashed, double] (m-3-2)
(m-2-3) edge[fore, dashed] (m-3-2)
(m-3-2) edge[foredouble, line width=2pt, dashed, double] (m-4-2)
;

\end{tikzpicture}
}
    \caption{Configuration after three iterations.}
  \end{subfigure}
  \begin{subfigure}{.3\linewidth}

{ 

\begin{tikzpicture}[ampersand replacement=\&]
\tikzset{fore/.style={preaction={draw=white, -,line width=4pt}}}
\tikzset{foredouble/.style={preaction={draw=white, -,line width=3pt, double}}}
\matrix (m) [matrix of math nodes,row sep=2em,column sep=2em] {
         \& 110      \&          \\
\underline{\bf{111}} \& \bf{100} \& 010      \\
\bf{101} \& \bf{011} \& \bf{000} \\
         \& \bf{001} \&          \\
};

\path[fill=gray!50, opacity=.35, pattern=north east lines] (m-2-2.south)
  to (m-3-3.west) to (m-4-2.north) to (m-3-1.east) to (m-2-2.south);

\path[]
(m-1-2) edge[fore, dashed] (m-2-2)
(m-2-2) edge[foredouble, line width=2pt, dashed, double] (m-3-1)
(m-2-2) edge[foredouble, line width=2pt, dashed, double] (m-3-3);

\path[fill=gray!50, opacity=.35, pattern=vertical lines] (m-2-1.south east)
 to (m-3-2.north west) to (m-4-2.north west) to (m-3-1.south east)
 to (m-2-1.south east);

\path[]
(m-1-2) edge[fore, dashed] (m-2-1)
(m-1-2) edge[fore] (m-2-3)
(m-2-1) edge[foredouble, line width=2pt, dashed, double] (m-3-1)
(m-2-3) edge[fore, dashed] (m-3-3)
(m-3-1) edge[fore, line width=2pt] (m-4-2)
(m-3-3) edge[foredouble, line width=2pt, dashed, double] (m-4-2)

(m-2-1) edge[foredouble, line width=2pt, dashed, double] (m-3-2)
(m-2-3) edge[fore, dashed] (m-3-2)
(m-3-2) edge[foredouble, line width=2pt, dashed, double] (m-4-2)
;

\end{tikzpicture}
}
    \caption{Configuration after four iterations.}
  \end{subfigure}
  \begin{subfigure}{.3\linewidth}

{ 

\begin{tikzpicture}[ampersand replacement=\&]
\tikzset{fore/.style={preaction={draw=white, -,line width=4pt}}}
\tikzset{foredouble/.style={preaction={draw=white, -,line width=3pt, double}}}
\matrix (m) [matrix of math nodes,row sep=2em,column sep=2em] {
         \& 110      \&          \\
\bf{111} \& \bf{100} \& 010      \\
\bf{101} \& \bf{011} \& \bf{000} \\
         \& \underline{\bf{001}} \&          \\
};

\path[fill=gray!50, opacity=.35, pattern=north east lines] (m-2-2.south)
  to (m-3-3.west) to (m-4-2.north) to (m-3-1.east) to (m-2-2.south);

\path[]
(m-1-2) edge[fore, dashed] (m-2-2)
(m-2-2) edge[foredouble, line width=2pt, dashed, double] (m-3-1)
(m-2-2) edge[foredouble, line width=2pt, dashed, double] (m-3-3);

\path[fill=gray!50, opacity=.35, pattern=vertical lines] (m-2-1.south east)
 to (m-3-2.north west) to (m-4-2.north west) to (m-3-1.south east)
 to (m-2-1.south east);

\path[]
(m-1-2) edge[fore, dashed] (m-2-1)
(m-1-2) edge[fore] (m-2-3)
(m-2-1) edge[foredouble, line width=2pt, dashed, double] (m-3-1)
(m-2-3) edge[fore, dashed] (m-3-3)
(m-3-1) edge[fore, line width=2pt, dashed] (m-4-2)
(m-3-3) edge[foredouble, line width=2pt, dashed, double] (m-4-2)

(m-2-1) edge[foredouble, line width=2pt, dashed, double] (m-3-2)
(m-2-3) edge[fore, dashed] (m-3-2)
(m-3-2) edge[foredouble, line width=2pt, dashed, double] (m-4-2)
;

\end{tikzpicture}
}
    \caption{Configuration after five iterations.}
  \end{subfigure}
  \caption{Regulator states of component $a$ during computation of regulation
  cover set for value change $\explicitTransitionValueChange{a}{1}{0}$.
  Bold text, lines and shaded areas indicate (partial) regulator states which
  enable the value change (${\partialRegulatorStateSubset}_{\textsc{ena}}$).
  The underlined regulator state is the state covered in the respective iteration.
  Dashes represent removed partial regulator states
  (${\partialRegulatorStateSubset}_{\textsc{rmv}}$) and double lines
  represent partial regulator states included in the regulation cover set
  (${\partialRegulatorStateSubset}_{\explicitTransitionValueChange{a}{1}{0}}$).
  \label{fig:inhibition_regulation_cover_set}}
\end{figure}

The algorithm begins with covering the regulator state $\orderedTuple{0 0 0}$.
Unlike in the case of increasing $a$, a nonempty candidate extension set exists
for partial regulator states on level ${\partialRegulatorStateSubset}_{2}$
containing a single element $\singletonSet{\orderedTuple{\anyValue 0 \anyValue}}$.
This partial regulator state alone, however, does not suffice to cover
$\orderedTuple{0 0 0}$ and extension set
$\{\orderedTuple{0 0 \anyValue}, \orderedTuple{\anyValue 0 0}\}$
is used instead as indicated by double lines in
Figure~\ref{fig:inhibition_regulation_cover_set} (b).
Notice that in this case, the node $\orderedTuple{0 0 0}$ gets covered by
two partial regulator states having one more regulator value specification
(a total of $4$ specifications against the explicit $3$).

According to the weight function, $\orderedTuple{1 0 0}$ gets covered next.
$\orderedTuple{\anyValue 0 \anyValue}$ is no longer available,
thus the first nonempty candidate extension set is
$\singletonSet{\orderedTuple{1 0 \anyValue}}$.
Although $\orderedTuple{1 0 \anyValue}$ alone is not enough to fully cover
$\orderedTuple{1 0 0}$, the cover set
${\partialRegulatorStateSubset}_{\explicitTransitionValueChange{a}{1}{0}}$
already contains $\orderedTuple{\anyValue 0 0}$ which covers
$\orderedTuple{1 0 0}$ completely in combination with
$\orderedTuple{1 0 \anyValue}$.
Thus, $\orderedTuple{1 0 0}$ gets covered by including only $2$ additional
regulator value specifications, effectively "paying-off" the depth incurred
while covering $\orderedTuple{0 0 0}$.

Covering $\orderedTuple{0 1 1}$ and subsequently $\orderedTuple{1 1 1}$
is identical to that of $\orderedTuple{0 0 0}$ and $\orderedTuple{1 0 0}$.
Both of them thus get covered by three partial regulator states
$\orderedTuple{0 \anyValue 1}$, $\orderedTuple{\anyValue 1 1}$ and
$\orderedTuple{1 \anyValue 1}$ as shown in
Figure~\ref{fig:inhibition_regulation_cover_set} (d) and (e).
Furthermore, $\orderedTuple{0 0 \anyValue}$ and $\orderedTuple{0 \anyValue 1}$,
fully cover $\orderedTuple{0 0 1}$ and $\orderedTuple{1 0 \anyValue}$,
$\orderedTuple{1 \anyValue 1}$ fully covers $\orderedTuple{1 0 1}$.
As such, the remaining two regulator states are covered with empty extension
sets and the final solution uses $12$ regulator value specifications
as opposed to the $18$ required by the explicit representation.

The fractional part of the weight function is crucial to distinguish between
$\orderedTuple{0 0 1}$, $\orderedTuple{1 0 1}$ and
$\orderedTuple{0 1 1}$, $\orderedTuple{1 1 1}$ after the second iteration
(Figure~\ref{fig:inhibition_regulation_cover_set} (c)).
Covering $\orderedTuple{0 0 1}$ or $\orderedTuple{1 0 1}$ before
$\orderedTuple{0 1 1}$ and $\orderedTuple{1 1 1}$ would include
either $\orderedTuple{\anyValue \anyValue 1}$ or $\orderedTuple{\anyValue 0 1}$,
depending on the exact order, in the final regulation cover set.
As both of them are redundant, this would lead to a suboptimal solution.

\end{example}
}

Algorithm \ref{alg:cover_set_computation} is quasilinear in the number of
regulator states and quadratic in the number of regulators.
Its main complexity comes from computing the
extension sets ${\partialRegulatorStateSubset}_{\textsc{ext}}$.
Whether a regulator state $\regulatorState \in \regulatorStateSet{\component}$
is covered by some ${\partialRegulatorStateSubset}_{\transitionValueChange} \cup
{\partialRegulatorStateSubset}_{\textsc{ext}}$ can be checked in
$\mathcal{O}(\size{\regulatorSet{\component}})$.
Each $\regulatorState$ requires at most $\size{\regulatorSet{\component}}$
such tests (but usually much less).
As such, the extension set can be computed in
$\mathcal{O}({\size{\regulatorSet{\component}}}^2)$ and thus,
for all the regulator states:
$\mathcal{O}(\size{\regulatorStateSet{\component}} \cdot
{\size{\regulatorSet{\component}}}^2)$.
Finally, the quasilinear complexity comes from the need to keep
the regulator states in a priority queue giving
us the final complexity of
$\mathcal{O}(\size{\regulatorStateSet{\component}} \cdot
(\textsc{log}(\size{\regulatorStateSet{\component}}) +
{\size{\regulatorSet{\component}}}^2))$.

Algorithm~\ref{alg:cover_set_computation} does not require explicit enumeration
of parametrisations when coupled with the parametrisation set semantics $\abstractSemantics$.
The parametrisation set is only used to determine which regulator states enable
the value change (queries to ${\partialRegulatorStateSubset}_{\textsc{ena}}$).
This information is readily available using $\abstractSemantics$
in the form of parameter values of the relevant bound.


\section{Discussion}
\label{sec:discussion}

The goal-oriented model reduction procedure for parametrised models
has been extended to parametric regulatory networks.
The parametric reduction procedure is compatible with a large family
of parametrisation set semantics functions, including the over-approximating
semantics introduced for PRNs in~\cite{me18},
without the need to enumerate the parametrisations explicitly.

The reduction method can be applied alongside the model refinement procedure
based on unfolding~\cite{me18}.
The parametric reduction can be applied on-the-fly within PRN unfolding
in the same fashion the reduction procedure for parametrised networks
is applied in Petri net unfoldings~\cite{Chatain17}.
The application to PRN unfoldings suffers from the same challenge
with cut-off events as the parametrised version with Petri net unfoldings.
The challenge arises from the need to keep track of the transition set
as the model evolves (transitions are pruned) by the reduction procedure
along the unfolding process.
Moreover, a similar challenge is already present in PRN unfoldings due to
parametrisation sets~\cite{me18}.
Two different methods are used to tackle the issue.
In~\cite{Chatain17}, if more transitions are encountered during the unfolding,
the respective branch is reiterated with the new transition set.
In~\cite{me18}, a new branch is introduced into the unfolding for the new
parametrisation set instead.
Both of the methods are applicable for transitions (respectively,
$\activationLimitVector$ and $\inhibitionLimitVector$) in PRN unfoldings
with model reduction.

The parametric reduction is an independent procedure and can be applied
in any other setting besides the mentioned coupling with model refinement.
Moreover, should complexity be a concern, several possibilities to
abstract the procedure exist.
The regulation cover set allows for a different algorithm, or even to relax
the definition itself.
Or, the condition for a trace to be realisable can be dropped from the
validity criterion for objectives to avoid having to check against
parametrisation sets.
Both of the suggested approximations are sound as adding new transitions has
no effect on minimal traces.

Future work includes the refinement of the interplay between parametric model
reduction and model refinement, further applications and extensions of the
parametric model reduction itself and application of goal-oriented reduction
to a wider variety of parametric models.

\bibliographystyle{plainurl}
\bibliography{manuscript}


\begin{appendix}


\section{Proof of Theorem~\ref{thm:minimal_trace_preservation}}
\label{apx:minimal_trace_preservation_proof}

Here we conduct the proof of Theorem~\ref{thm:minimal_trace_preservation}
stating that by conducting reduction of a DPRN $\dprn$ w.r.t. an initial state
$\state$ and goal $\goal$, all transitions of all minimal traces are
preserved by the means of compatible partial transitions in
$\validObjectiveTransitionSet{\state}{\reducedObjectiveSet}$.

We first show that an existence of an objective $\objective$ that covers
a transition $\transition =
\explicitTransition{\component}{\myValue}{\secondValue}{\regulatorState}$
of a realisable trace, formally $\objective =
\explicitObjective{\component}{\myValue - \integer \scalarMultiplication
\transitionSign{\transition}}
{\secondValue + \secondInteger \scalarMultiplication
\transitionSign{\transition}}$
for some $\integer, \secondInteger \in \naturalWithZero$,
in $\reducedObjectiveSet$ is enough to guarantee existence of a compatible
partial transition and thus preservation of the transition.


{ 
\def\someTransition{\vectorElement{\trace}{\index}}

\begin{lemma}
\label{lem:objective_to_partial_transition}

Let a realisable trace $\trace$ of an DPRN $\dprn$ reach goal $\goal$ from initial state
$\state$ and let $\reducedObjectiveSet$ be the objective set constructed by
Definition~\ref{def:reduction_procedure} for the given goal and initial state.
Then for any $\someTransition =
\explicitTransition{\component}{\myValue}{\secondValue}{\regulatorState}$
covered by an objective $\objective \in \reducedObjectiveSet$
there exists a partial transition $\partialTransition =
\explicitTransition{\component}{\myValue}
{\secondValue}{\partialRegulatorState} \in
\validObjectiveTransitionSet{\state}{\reducedObjectiveSet}$
such that $\regulatorState \in \partialRegulatorState$.

\end{lemma}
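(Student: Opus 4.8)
The plan is to unpack the relevant definitions in three stages and then exhibit an explicit witness trace for the required validity condition. Write $\transition = \vectorElement{\trace}{\index} = \explicitTransition{\component}{\myValue}{\secondValue}{\regulatorState}$ for the transition in question, and let $\objective \in \reducedObjectiveSet$ be the objective covering it. By the notion of covering recalled at the beginning of this appendix, $\objectiveSign{\objective} = \transitionSign{\transition}$ and the value change $\explicitTransitionValueChange{\component}{\myValue}{\secondValue}$ lies within the value range of $\objective$; a short case split on the sign shows this is exactly the combination of sign and $\max/\min$ constraints appearing in Definition~\ref{def:objective_transitions}. Consequently, for \emph{every} partial regulator state $\partialRegulatorState \in \regulationCoverSet{\explicitTransitionValueChange{\component}{\myValue}{\secondValue}}$ the partial transition $\explicitPartialTransition{\component}{\myValue}{\secondValue}{\partialRegulatorState}$ already belongs to $\objectiveTransitionSet{\objective}$.

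Second, I would single out a $\partialRegulatorState$ that contains $\regulatorState$. Since $\transition = \vectorElement{\trace}{\index}$ is a transition of a trace of $\dprn$, it is enabled, so $\regulatorState$ is a regulator state under which $\transitionTuple$ is enabled by the parametrisation set. The first (covering) condition of Definition~\ref{def:regulation_cover_set} then yields some $\partialRegulatorState \in \regulationCoverSet{\explicitTransitionValueChange{\component}{\myValue}{\secondValue}}$ with $\regulatorState \in \partialRegulatorState$. Fixing this $\partialRegulatorState$, the partial transition $\partialTransition = \explicitPartialTransition{\component}{\myValue}{\secondValue}{\partialRegulatorState}$ lies in $\objectiveTransitionSet{\objective}$ by the previous paragraph and satisfies $\regulatorState \in \partialRegulatorState$; it remains only to establish that it is \emph{valid} in $\state$.

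The third step is the heart of the argument, and the one where the structure of the concrete trace is actually used. Let $\secondState = \poset{\prefix{\trace}{\index - 1}}$ be the state reached immediately before $\transition$ fires. Because $\transition$ is enabled in $\secondState$ we have $\regulatorProjection{\component}{\secondState} = \regulatorState$, and since $\regulatorState \in \partialRegulatorState$ every specified coordinate satisfies $\vectorElement{\partialRegulatorState}{\regulator} = \vectorElement{\regulatorState}{\regulator} = \vectorElement{\secondState}{\regulator}$. The prefix $\prefix{\trace}{\index - 1}$ is itself realisable: its set of transitions is contained in $\collapse{\trace}$, so by the monotonicity of the parametrisation set semantics (condition~2 of Definition~\ref{def:parametrisation_set_semantics}) its admissible set contains $\admissibleSet{\collapse{\trace}} \neq \emptyset$. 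For each regulator $\regulator$ with $\vectorElement{\partialRegulatorState}{\regulator} \neq \anyValue$, this prefix then witnesses the validity of the objective $\objectiveValues{\vectorElement{\state}{\regulator}}{\vectorElement{\partialRegulatorState}{\regulator}}$: it starts in $\state$, where $\regulator$ carries the source value $\vectorElement{\state}{\regulator}$, and ends in $\secondState$, where $\regulator$ carries $\vectorElement{\partialRegulatorState}{\regulator}$, thus meeting Definition~\ref{def:objective} (and when the two values coincide, validity is immediate). Hence $\partialTransition \in \validObjectiveTransitionSet{\state}{\objective} \subseteq \validObjectiveTransitionSet{\state}{\reducedObjectiveSet}$, as claimed.

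I expect this third step to be the only delicate point, since it is where one must verify that the prefix of $\trace$ genuinely drives each relevant regulator from its initial value $\vectorElement{\state}{\regulator}$ to the value prescribed by $\partialRegulatorState$, matching the somewhat subtle index convention in the validity clause of Definition~\ref{def:objective}; the degenerate case $\index = 1$, where the prefix is empty and all specified coordinates already agree with $\state$, should be noted explicitly. The remaining inclusions are a routine matter of unfolding Definitions~\ref{def:objective_transitions} and~\ref{def:regulation_cover_set}.
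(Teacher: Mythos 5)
Your proof is correct and follows the same route as the paper's: pick a covering partial regulator state via the first condition of Definition~\ref{def:regulation_cover_set}, observe membership in $\objectiveTransitionSet{\objective}$ from Definition~\ref{def:objective_transitions}, and use the trace itself (in your case, explicitly its realisable prefix up to the transition) as the witness of validity for the regulators' objectives. The only difference is that you spell out the witness argument that the paper leaves as a one-line remark.
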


\begin{proof}

Let $\partialRegulatorState \in
\regulationCoverSet{\explicitTransitionValueChange{\component}{\myValue}
{\secondValue}}$ be arbitrary such that
$\regulatorState \in \partialRegulatorState$.
We know at least one such $\partialRegulatorState$ exists by definition of
regulation cover set (Definition~\ref{def:regulation_cover_set}).

Then, by Definition~\ref{def:objective_transitions}, the corresponding
partial transition $\partialTransition = \explicitTransition{\component}
{\myValue}{\secondValue}{\partialRegulatorState} \in
\objectiveTransitionSet{\objective} \subseteq
\objectiveTransitionSet{\reducedObjectiveSet}$.
Finally, since $\trace$ itself is a witness of the validity of objectives
for all regulators required by $\partialTransition$, $\partialTransition \in
\validObjectiveTransitionSet{\state}{\objective} \subseteq
\validObjectiveTransitionSet{\state}{\reducedObjectiveSet}$.

\end{proof}
}

We now show that if a transition of a realisable trace $\trace$
is not covered by an objective in $\reducedObjectiveSet$,
the trace is not minimal.

Let thus $\vectorElement{\trace}{\index}$
with $\componentProjection{\vectorElement{\trace}{\index}} = \component$
be such a transition,
and let $\transition = \vectorElement{\trace}{\thirdIndex}$ be the last
transition covered by an objective in $\reducedObjectiveSet$
such that $\thirdIndex < \index$
and $\componentProjection{\transition} = \component$, if it exists.
Finally, let ${\transition}' = \vectorElement{\trace}{\secondIndex}$ be the
first transition such that $\index < \secondIndex$,
$\componentProjection{{\transition}'} = \component$
and $\vectorElement{\preset{\suffix{\trace}{\secondIndex}}}{\component} =
\vectorElement{\poset{\prefix{\trace}{\thirdIndex}}}{\component}$
(respectively, $\vectorElement{\preset{\suffix{\trace}{\secondIndex}}}
{\component} = \vectorElement{\state}{\component}$
if $\transition$ does not exist), if it exists.

We now construct a trace ${\trace}'$ by removing all transitions
in $\infix{\trace}{\thirdIndex + 1}{\secondIndex - 1}$ which change the value of
$\component$ from $\trace$,
where $\thirdIndex = 0$ if $\transition$ does not exist
and $\secondIndex = \size{\trace}$ if ${\transition}'$ does not exist.
Since the removed transitions form a loop on the value of $\component$,
respectively, have no causal successors modifying the value of $\component$
if ${\transition}'$ does not exist, the evolution of $\component$
along ${\trace}'$ remains valid.

Moreover, if any transition $\vectorElement{\trace}{\kappa}$
covered by $\objective \in \reducedObjectiveSet$
such that $\thirdIndex < \kappa < \secondIndex$
and $\componentProjection{\vectorElement{\trace}{\kappa}} = \component$
exists, then there exists another transition $\vectorElement{\trace}{\iota} \in
\prefix{\trace}{\thirdIndex}$ with the same value change as
$\vectorElement{\trace}{\kappa}$.

Let thus $\explicitObjective{\component}{\alpha}{\secondValue} \in
\reducedObjectiveSet$ cover at least one transition modifying the value of
$\component$ in $\infix{\trace}{\thirdIndex + 1}{\secondIndex - 1}$.
For such an objective to be included there must exist a covered transition requiring value $\secondValue$
of component $\component$ somewhere along the trace and thus,
by point (2) of Definition~\ref{def:reduction_procedure},
$\objectiveValues{\vectorElement{\state}{\component}}{\secondValue} \in
\reducedObjectiveSet$.
The transition $\transition$ therefore has to exist, meaning an objective
$\objectiveValues{\beta}{\vectorElement{\poset{\prefix{\trace}{\thirdIndex}}}
{\component}} \in \reducedObjectiveSet$ also exists.
By point (3) of Definition~\ref{def:reduction_procedure}, we get
$\objectiveValues{\vectorElement{\poset{\prefix{\trace}{\thirdIndex}}}
{\component}}{\secondValue}, \objectiveValues{\secondValue}
{\vectorElement{\poset{\prefix{\trace}{\thirdIndex}}}{\component}} \in
\reducedObjectiveSet$.
Then, since $\vectorElement{\trace}{\index}$ is not covered,
we have $\objectiveSign{\objectiveValues{\beta}
{\vectorElement{\poset{\prefix{\trace}{\thirdIndex}}}{\component}}} =
\objectiveSign{\explicitObjective{\component}{\alpha}{\secondValue}}$
and moreover $\alpha$ is an intermediate value in the objective
$\objectiveValues{\beta}{\vectorElement{\poset{\prefix{\trace}{\thirdIndex}}}
{\component}}$.
Thus, the same value change as the covered transition in
$\infix{\trace}{\thirdIndex + 1}{\secondIndex - 1}$ had to occur in
$\prefix{\trace}{\thirdIndex}$ in order to reach
$\vectorElement{\poset{\prefix{\trace}{\thirdIndex}}}{\component}$.

Since $\objectiveValues{\vectorElement{\state}{\goalComponent}}{\goalValue} \in
\reducedObjectiveSet$ by point (1) of Definition~\ref{def:reduction_procedure},
${\trace}'$ surely reaches the goal from the initial state.
Furthermore, thanks to properties of the parametrisation set semantics
and since $\trace$ is realisable, also
$\admissibleSet{\collapse{{\trace}'}} \neq \emptyset$.
If ${\trace}'$ is valid (regulator state of each transition matches the source state),
$\trace$ is not minimal and we are done.
Let us therefore assume there exists a transition
$\vectorElement{\trace}{\fourthIndex}$ with
$\componentProjection{\vectorElement{\trace}{\fourthIndex}} \neq \component$
which causally depends on one of the removed transitions.
We first show such $\vectorElement{\trace}{\fourthIndex}$ cannot be
covered by an objective in $\reducedObjectiveSet$ by contradiction.

Let thus $\objective \in \reducedObjectiveSet$ cover
$\vectorElement{\trace}{\fourthIndex}$
and let $\secondValue$ be the value of $\component$
$\vectorElement{\trace}{\fourthIndex}$ depends on.
Surely $\secondValue \neq \vectorElement{\poset{\prefix{\trace}{\thirdIndex}}}{\component}$,
as otherwise the removal of $\infix{\trace}{\thirdIndex + 1}{\secondIndex - 1}$
has no effect on $\vectorElement{\trace}{\fourthIndex}$.
As such, since $\objectiveValues{\vectorElement{\state}{\component}}
{\secondValue} \in \reducedObjectiveSet$, $\transition$ must exist.
Then also objective $\objectiveValues{\alpha}
{\vectorElement{\poset{\prefix{\trace}{\thirdIndex}}}{\component}} \in
\reducedObjectiveSet$ and by point (3) of
Definition~\ref{def:reduction_procedure} $\objectiveValues
{\vectorElement{\poset{\prefix{\trace}{\thirdIndex}}}{\component}}
{\secondValue} \in \reducedObjectiveSet$.
Therefore the first transition reaching the value $\secondValue$ of $\component$ in
$\infix{\trace}{\thirdIndex}{\fourthIndex}$ is covered, which is a contradiction
with $\vectorElement{\trace}{\thirdIndex}$ being the last covered transition.

We can thus repeat the reasoning for the uncovered $\vectorElement{\trace}{\fourthIndex}$
and remove the respective loop or unused tail of evolution of component
$\componentProjection{\vectorElement{\trace}{\fourthIndex}}$.
Since $\trace$ is finite, all invalid transitions are ultimately
removed while retaining realisability and reachability of the goal.
$\trace$ is thus not minimal.

\end{appendix}

\end{document}